\numberwithin{equation}{section} \frenchspacing \tolerance=6000
\renewcommand{\epsilon}{\varepsilon}
\newcommand{\be}{\begin{equation}}
\newcommand{\ee}{\end{equation}}
\newcommand{\C}{\mathbb{C}}
\newcommand{\R}{\mathbb{R}}
\newcommand{\T}{\mathbb{T}}
\newcommand{\Z}{\mathbb{Z}}
\newcommand{\cE}{{\mathcal E}}
\newtheorem{theorem}{Theorem}[section]
\newtheorem{lemma}[theorem]{Lemma}
\newtheorem{corollary}[theorem]{Corollary}
\newtheorem{remark}[theorem]{Remark}
\date{\today}
\begin{document}
\title[Asymptotics of Eigenvalues of the Two-particle Schr\"{o}dinger
operators on lattices] {Asymptotics of Eigenvalues of the
Two-particle Schr\"{o}dinger operators on lattices. \\}

\author[Saidakhmat  N. Lakaev,\,
  Shohruh Yu. Holmatov]{Saidakhmat  N. Lakaev,\,
  Shohruh Yu. Holmatov}

\address{
{$^2$ Samarkand State University, Samarkand (Uzbekistan)} \ {E-mail:
slakaev@mail.ru }}

\address{
$^3$ {Samarkand State University, Samarkand (Uzbekistan)}{
E-mail:Shohruhhon1@mail.ru}}

\begin{abstract}
The Hamiltonian of a system of two quantum mechanical particles
moving on the $d$-dimensional lattice $\Z^d$ and interacting via
zero-range attractive pair potentials is considered. For the
two-particle energy operator $H_{\mu}(K),$ $K\in \T^d=(-\pi,\pi]^d$
-- the two-particle quasi-momentum, the existence of a unique
positive eigenvalue $z(\mu, K)$  above the upper edge of the
essential spectrum of $H_{\mu}(K)$ is proven and asymptotics for
$z(\mu, K)$ are found when $\mu$ approaches to some $\mu_0(K)$ and
$K\to 0.$
\end{abstract}

\maketitle

Subject Classification: {Primary: 81Q10, Secondary: 35P20, 47N50}

Keywords: {discrete Schr\"{o}dinger operators, two-particle systems,
Hamiltonians, zero-range potentials, eigenvalues, essential
spectrum, lattice.}

\section{ Introduction}

In this paper we will consider the family of the two-particle
Schr\"{o}dinger operators associated to a system of two identical
particles moving on $d$ - dimensional cubic lattice $\Z^d$ and
interacting via zero-range potentials. In the momentum
representation the cor\-res\-ponding operator is of the form
$$H_{\mu}(K) = H_0(K)+\mu V,\quad K\in\T^d,$$ where $\T^d$ -- $d$
dimensional torus. The non perturbed operator $H_0(K)$ is the
multiplication ope\-ra\-tor by the function $$
\cE_K(q)=\epsilon\left(\frac K2+q\right)+\epsilon\left(\frac
K2-q\right),$$ where
\begin{equation}\label{rrrr}
\epsilon(q)=\sum\limits_{i=1}^d(1-\cos q_i),
\end{equation}
$V$ is integral operator of rank one and $\mu>0$ is repulsive
interaction. $H_{\mu}(K)$ has continuous spectrum $[\cE_{\min
}(K),\cE_{\max }(K)]$ and at most one eigenvalue $z(\mu,K)$ on the
right from $\cE_{\max}(K).$

In celebrated work  \cite{MK.BS} of B.Simon and M.Klaus it is
considered a family of Schr\"{o}dinger ope\-ra\-tors
$H=-\Delta+\lambda V$ and, a situation where as $\lambda \downarrow
\lambda_0$ some eigenvalue $e_i(\lambda)\uparrow 0,$ i.e., as
$\lambda \downarrow \lambda_0$ an eigenvalue is absorbed into
continuous spectrum, and conversely, as $\lambda \uparrow
\lambda_0+\epsilon$ continuous spectrum "gives birth" to a new
eigenvalue. This phenomenon in \cite{MK.BS} is called "coupling
constant threshold".

The phenomenon coupling constant threshold is a significant tool not
only for the two-particle continuous Hamiltonians (see
\cite{MK.BS}), but also in the existence of the three-particle bound
states of the Hamiltonians of a system of three particles, in
particular, for the Efimov effect (see \cite{{Sob},{Yaf2},{Tam1}}).

In the case of the three-particle lattice Schr\"{o}dinger
operators $\textit{\textbf{H}}_{\mu_0} (K),$ $K\in \T^3$ --
three-particle quasi-momentum, associated to the Hamiltonian of a
system of three particles on $\Z^3$ interacting via zero-range
pair potentials $\mu<0$ the following phenomenon is also deeply
related to the coupling constant threshold $\mu_0<0$ of the
two-particle Schr\"{o}dinger operators: for $\mu=\mu_0$ the
corresponding three-particle lattice Schr\"{o}dinger operator
$\textit{\textbf{H}}_{\mu_0}(0)$ has infinitely many eigenvalues,
whereas $\textit{\textbf{H}}_{\mu_0} (K),$ $K\ne 0$ has only
finitely many (see \cite{{SL},{ALZM}}).

Throughout physics, stable composite objects are usually formed by
way of attractive forces, which allow the constituents to lower
their energy by binding together. Repulsive forces separate
particles in free space. However, in structured environment such as
a periodic potential and in the absence of dissipation, stable
composite objects can exist even for repulsive interactions (see
\cite{RBAP}).

In the present paper, for the two-particle ope\-ra\-tor
$H_{\mu}(K),$ $K\in\T^d,$ it is established: at first, if $d\ge 3$
then there exists such $\mu_0=\mu_0(K)>0$ (coupling constant
threshold) that the operator has non eigenvalue for any
$0<\mu<\mu_0,$ but for any $\mu>\mu_0$ there is a unique
eigenvalue $z(\mu,K)$ of $H_{\mu}(K)$ lying above the upper edge
of $\sigma_{\rm ess}(H_{\mu}(K)).$ In \cite{MK.BS}, it is only
assumed the existence of the coupling constant threshold
$\lambda_0>0,$ but in this work, the coupling constant threshold
is definitely found by the given data.

Secondly, as in \cite{MK.BS}, two questions will concern us: (i)
For fixed $K\in\T^d,$ is $z(\mu,K)$ analytic at $\mu=\mu_0,$ if
singular, does it have an expansion in some singular quality like
$(\mu-\mu_0)^{\alpha},$ $\alpha\in\R?$ (ii) For fixed $K\in\T^d,$
what is the rate at which $z(\mu,K)$ approaches to the upper edge
of the essential spectrum $\cE_{\max }(K),$ as $\mu$ approaches to
$\mu_0(K)?$

Thirdly, lattice Schr\"{o}dinger operators are strictly dependant
on the quasi-momentum $K$ of the system, and it is one of the
differences between continuous and lattice operators. Thus, we are
interested in one more question: (iii) What is the rate at which
$z(\mu_0(0),K)$ approaches to $\cE_{\max }(0)$ as $K\to 0?$

The paper is organized as follows.

In Section \ref{2} we introduce the two-particle operator $H_\mu(K)$
and give a location of its essential spectrum.

In Section \ref{3} we define coupling constant threshold $\mu_0(K)$
in some concrete domain $\Pi_0\subset\T^d$ and give main results of
the paper.

In Section \ref{4} we prove main results.

In Appendix for reader's convenience,  we give full proves of some
consequences of the implicit function theorem, used in the proof
of Theorem \ref{assymptotica} and proves of some lemmas, used in
the proof of Lemma \ref{asimp}.

\section{ The two-particle operator
$H_\mu(K)$ and its Essential spectrum} \label{2}

Let $\Z^d$ be $d$ dimensional hybercubic lattice and
$\T^d=(\R/2\pi\Z)^d=(-\pi,\pi]^d$ be $d$ dimensional torus
(Brillouin zone), the dual group of $\Z^d.$ Denote by
$L_2({\T}^d,d\eta)$ the Hilbert space of square-integrable functions
on $\T^d,$ where $d\eta$ -- Haar measure in $\T^d.$

Let $L^{e}_2 ({\T}^d)$ the subspace of even functions in
$L_2({\T}^d,d\eta).$ Consider the analytic function on $\T^d$
$$\epsilon(q)=\sum\limits_{i=1}^d (1-\cos q_i).$$

In the momentum representation the two-particle Hamiltonians are
given by the bounded self-adjoint operators on the Hilbert space
$L^e_2({\T}^d)$ as following (see \cite{ALMZM}):
\begin{equation}
 H_\mu(K)=H_0(K)+\mu V.
\end{equation}

The non-perturbed operator $H_0(K)$ on $L^{e}_2({\T}^d)$ is
multiplication operator by the function $\cE_K(\cdot):$
\begin{equation}
(H_{0}(K)f)(q)=\cE_K( q)f(q),\quad f \in L^{e}_2({\T}^d),
\end{equation} where
 \begin{equation} \label{kinetik}
\cE_{K}(q)=\frac{1}{m}\left[ \varepsilon \left(\frac K2-q\right)
+\varepsilon \left(\frac K2+q\right)\right] = \frac{1}{m}
\sum\limits_{j=1}^{d}\left[2 - 2\cos{\left(\frac{K_j}{2}\right)}\cos
q_j\right],
\end{equation}

The perturbation $V$ is an integral operator of rank one
$$(Vf)(p)=\int\limits_{{\T}^d} f(q)d\eta,\quad f\in L_2 ({\T}^d).$$

Further without loss of generality we assume that $m=1.$

The perturbation $V$ of the multiplication operator $H_0(K)$ is a
self-adjoint operator of rank one. Therefore in accordance to Weil's
theorem (see \cite{RSIV}), the essential spectrum of $H_{\mu}(K), \,
K\in {\T}^{d}$ fills the following interval on the real axis:
$$
  \sigma_{ess}(H_{\mu}(K))=\sigma(H_{0}(K))=
  [\cE _{\min }(K), \cE_{\max}(K)],
$$ where
\begin{gather*}
\cE _{\min }(K)\equiv\min_{p}\cE_K(p)=\sum\limits_{j=1}^d
\left[2-2\cos{\left(\frac{K_j}{2}\right)}\right],\\
\cE_{\max}(K)\equiv\max_{p}\cE_K(p)
=\sum\limits_{j=1}^d \left[2+2\cos{\left(\frac{K_j}{2}\right)}\right].\\
\end{gather*}

\begin{remark}
We remark that the essential spectrum
$[\cE_{\min}(K),\cE_{\max}(K)]$ strongly depends on the
quasi-momentum $K\in\T^d;$ when $K=\vec{\pi}=(\pi,\pi,...,\pi)\in
\T^d$ the essential spectrum of $H_{\mu}(K)$ degenerated to the set
consisting of a unique point $\{\cE_{\min}(\vec{\pi})=
\cE_{\max}(\vec{\pi}) =2d\}$ and hence  the essential spectrum of
$H_{\mu}(K)$ is not absolutely continuous for all $K\in \T^d.$
\end{remark}

\section{Main Results}\label{3}

Set
$$\Pi_n=\left\{k\in\T^d:\hspace{1mm} \text{at most $n$ coordinates
of $k$ is equal to $\pi$}\right\},\quad 0\le n\le d.$$ It is clear
that $\Pi_d=\T^d,$ $\Pi_m\subset\Pi_n$ if $m<n$ and
$$\Pi_0=\left\{k=(k_1,\ldots,k_d\in\T^d:\, k_i\ne \pi,\,
i=1,\ldots,d)\right\}.$$

Let $\C$ be the complex plane. For any $K \in \T^d$ we define a
regular function $\nu(K,\cdot)$ in ${\C} {\setminus}
[\cE_{\min}(K),\cE_{\max}(K)]$ by
\begin{equation}\label{eeee}
  \nu(K,z)=\int\limits_{{\T}^d}
  \frac{d\eta} {z-\cE_K(q)}.
\end{equation}

For $d\geq 3$ and $K\in\Pi_{d-3},$ the function $\cE_K(q)$ has a
unique non degenerated maximum and, consequently, the following
integral exists and defines analytic function on $\Pi_0:$
\begin{equation}\label{qqqq}
\nu(K) =\nu(K,\cE_{\max}(K))=
 \int\limits_{{\T}^d}
  \frac{d\eta} {\cE_{\max}(K)-\cE_K(q)}
\end{equation}
(see Lemma \ref{asimp}).

\begin{remark}\label{deguniq}
If $n$ coordinates of $K$ equals to $\pi$ then
$\cE_{\max}(K)-\cE_K(q)$ in \eqref{qqqq} can be considered as the
function defined on $\T^{d-n}$ having non-degenerated maximum at
$\vec{\pi}=(\pi,\ldots,\pi)\in\T^{d-n}.$ Therefore without loss of
generality we can always assume that any coordinates of $K\in\T^d$
is not equal to $\pi,$ that is, $K\in\Pi_0\subset\T^d.$
\end{remark}

Let $d\geq3.$ Define the positive number $\mu_0(K),$ $K\in\Pi_{0}$
as
\begin{equation}
\mu_0(K)=\frac{1}{\nu(K)}=\left(\int\limits_{\T^d}
\frac{d\eta}{\cE_{\max}(K)-\cE_K(q)}\right)^{-1}.
\end{equation}

\vspace{3mm}

The following theorem is on the existence and dependance of
eigenvalues of the two-particle operator $H_{\mu}(K)$ on
interaction energy $\mu>0.$ In fact, we prove that there exists a
unique eigenvalue $z(\mu,K)$ of $H_{\mu}(K),$ $K\in\Pi_0$ above
its essential spectrum depending on the coupling constant
$\mu_0=\mu_0(K)>0.$ Moreover we find an expansion for the
difference
$$z(\mu,K)-\cE_{\max}(K)$$ at $\mu=\mu_0.$ This expansion is highly
dependant on dimension $d\ge 3$ of the quasi-momentum $K\in\T^d$:
$1)$ if $d=3,$ then $z(\mu,K)-\cE_{\max}(K)$ is an analytic
function of $\mu-\mu_0$ with the leading term of $(\mu-\mu_0)^2;$
$2)$ if $d=4,$ then the difference does not expand to Puizo
series, but it has an expansion with the first term of
$\sigma=(\mu-\mu_0)(-\ln (\mu-\mu_0))^{-1};$ $3)$ if $d\ge 5$ and
odd, then $z(\mu,K)-\cE_{\max}(K)$ is and analytic function of
$\alpha=(\mu-\mu_0)^{1/2}$ and the leading term of the expansion
is $\mu-\mu_0;$ $4)$ if $d\ge 6$ and even, then the expansion of
$z(\mu,K)-\cE_{\max}(K)$ is an analytic  function of
$$\sigma =(\mu-\mu_0)^{1/2},\quad \tau=(\mu-\mu_0)^{1/2}\ln
(\mu-\mu_0)^{1/2}$$ with the first term of $\mu-\mu_0.$

\begin{theorem}\label{assymptotica}
Let $d\ge 3.$  Then for any $K\in\Pi_{d-3}$ and $\mu>\mu_0(K)$ the
operator $H_{\mu}(K)$ has a unique eigenvalue $z(\mu, K)$ lying
above the upper edge $\cE_{\max}(K)$ of the essential spectrum of
$H_{\mu}(K).$  Moreover, for any $K\in \Pi_0$ the relation $\mu\to
\mu_0(K)$ holds if and only if $z(\mu,K)\to
z(\mu_0(K),K)=\cE_{\max}(K)$ and for $\mu-\mu_0(K)$ sufficiently
small and positive, the difference $z(\mu,K)-z(\mu_0(K),K)$ has
following absolutely convergent expansions:
\begin{itemize}
\item[(\rm{i})]if $d=3,$ then
$$z(\mu,K)-z(\mu_0(K),K)=\left(\sum\limits_{n=1}^{\infty} c_n(K)
[\mu -\mu_0(K)]^n\right)^2,$$ where $c_n(K),$ $n=1,2,\ldots$ is a
real number with $$c_1(K)=\left(\dfrac{\pi(\mu_0(K))^2
\Phi_0(K)}{2}\right)^{-1}$$ and
$$\Phi_0(K)=\dfrac{c}{\sqrt{\cos \dfrac{K_1}{2}\ldots\cos
\dfrac{K_d}{2}}},\quad c=const;$$
\item[(\rm{ii})]if $d=4$ then  \be\label{dd=4}
z(\mu,K)-z(\mu_0(K),K)=\sum\limits_{n\ge 1,\,m,k,l\ge 0} c(n,m,k,l;
K)\, \sigma^n\,\tau^m\,\omega^k\, \lambda^l \ee with
$$\sigma=\frac{\mu-\mu_0(K)}{-\ln
(\mu-\mu_0(K))},\quad \tau=\frac{1}{-\ln (\mu-\mu_0(K))}, \quad
\omega=\frac{\ln\ln (\mu-\mu_0(K)^{-1}}{-\ln (\mu-\mu_0(K))},\quad
\lambda =\mu-\mu_0(K)$$ and $c(n,m,k,l; K),$ $n=1,2,\ldots,$
$m,k,l=0,1,2\ldots$ is a real number with
$$c(1,0,0,0; K)=\frac{2}{(\mu_0(K))^2\Phi_0(K)};$$
\item[(\rm{iii})] if $d\ge 5$ and odd, then $$z(\mu,K)-
z(\mu_0(K),K)=\left(\sum\limits_{n\ge 1} c_n(K)
(\mu-\mu_0(K))^{n/2}\right)^{2},$$ where $c_n(K),$ $n=1,2,\ldots$ is
a real numbers with $$c_1(K)=\left(-(\mu_0(K))^{2}
\,\,\frac{\partial \nu}{\partial z}(K,\cE_{\max}(K))\right)^{-1/2}$$
and $\nu(\cdot,\cdot)$ is defined by \eqref{eeee};
\item[(\rm{iv})] if $d\ge 6$ and even, then \be\label{dd>6}
z(\mu,K)-z(\mu_0(K),K)=\left(\sum\limits_{n\ge 1,\,m\ge 0} c(n,m;
K)\, \sigma^n\,\tau^m\right)^2\ee with $$\sigma
=(\mu-\mu_0(K))^{1/2},\quad \tau=(\mu-\mu_0(K))^{1/2}\ln
(\mu-\mu_0(K))^{1/2}$$ where $c(n,m; K),$ $n=1,2,\ldots,$
$m=0,1,2,\ldots$ is a real number with
$$c(1,0; K)=\left(-(\mu_0(K))^{2} \,\,\frac{\partial \nu}{\partial
z}(K,\cE_{\max}(K))\right)^{-1/2}.$$
\end{itemize}
\end{theorem}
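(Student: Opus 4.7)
The plan is to reduce the eigenvalue problem for $H_\mu(K)$ to a transcendental scalar equation using rank-one perturbation theory, and then to invert that equation case by case with the help of asymptotic expansions of the resolvent integral $\nu(K,\cdot)$ supplied by Lemma \ref{asimp}.

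First I carry out a Birman--Schwinger-type reduction. For $z>\cE_{\max}(K)$ the operator $H_0(K)-z$ is invertible by multiplication by $(\cE_K(\cdot)-z)^{-1}$ on $L_2^e(\T^d)$. Since $V$ has rank one, with range spanned by the constant function $1$, the eigenvalue equation $H_\mu(K)f=zf$ is equivalent, after applying the functional $\int\cdot\,d\eta$ to the identity $f=-\mu\langle f,1\rangle(H_0(K)-z)^{-1}1$, to the scalar condition
\begin{equation*}
1=\mu\,\nu(K,z).
\end{equation*}
On $(\cE_{\max}(K),+\infty)$ the function $\nu(K,\cdot)$ is strictly decreasing, tends to $0$ as $z\to+\infty$, and (for $K\in\Pi_{d-3}$) increases to the finite limit $\nu(K)$ as $z\downarrow\cE_{\max}(K)$ by Lemma \ref{asimp}. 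Consequently this equation has a unique root precisely when $1/\mu<\nu(K)$, i.e.\ $\mu>\mu_0(K)$, and the same monotonicity yields $\mu\to\mu_0(K)\Longleftrightarrow z(\mu,K)\to\cE_{\max}(K)$.

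Next I set $w:=z-\cE_{\max}(K)>0$ and rewrite the equation as
\begin{equation*}
\nu(K)-\nu(K,\cE_{\max}(K)+w)=\frac{\mu-\mu_0(K)}{\mu\,\mu_0(K)}.
\end{equation*}
The right-hand side is analytic in $\mu-\mu_0(K)$. By Remark \ref{deguniq} we may assume $K\in\Pi_0$, so $\cE_K$ has a unique nondegenerate maximum. Localizing around that point and passing to the quadratic model reduces the singular part of $\nu$ to the model integrals $\int(w+|q|^2)^{-1}d\eta$; carrying out that analysis as in Lemma \ref{asimp} gives
\begin{equation*}
\nu(K)-\nu(K,\cE_{\max}(K)+w)=
\begin{cases}
a_1(K)\,w^{1/2}+\cdots, & d=3,\\
-b_1(K)\,w\ln w+\cdots, & d=4,\\
-\dfrac{\partial\nu}{\partial z}(K,\cE_{\max}(K))\,w+\cdots, & d\ge 5,
\end{cases}
\end{equation*}
where the remainders are analytic in $w^{1/2}$ when $d$ is odd, and analytic jointly in $w$ and $w^{d/2-1}\ln w$ when $d$ is even.

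Finally I invert these relations. In the odd-dimensional cases (i) and (iii) both sides are analytic in the single variable $y=w^{1/2}$ and the linear coefficient is nonzero; the standard implicit function theorem gives $y$ as an analytic function of $\mu-\mu_0(K)$ in case (i) and of $(\mu-\mu_0(K))^{1/2}$ in case (iii), and squaring produces the stated convergent series, with the leading coefficient $c_1(K)$ read off from the leading asymptotic displayed above. I expect the main obstacle to be the even-dimensional cases (ii) and (iv), where the logarithmic terms destroy any one-variable inversion: a naive iteration would generate $\ln\ln$ corrections (which is precisely why the auxiliary variable $\omega$ appears in (ii)). The cure is to treat $\sigma,\tau,\omega,\lambda$ (resp.\ $\sigma,\tau$) as independent coordinates and apply the multi-variable implicit function theorem from the Appendix, after verifying that the resulting implicit relation has a nondegenerate linearization at the origin; absolute convergence of the multi-indexed expansion and the identification of $c(1,0,0,0;K)$ and $c(1,0;K)$ then follow from that theorem.
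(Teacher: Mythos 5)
Your proposal follows essentially the same route as the paper: a Birman--Schwinger/Fredholm-determinant reduction to the scalar equation $1=\mu\,\nu(K,z)$, monotonicity of $\nu(K,\cdot)$ together with finiteness of $\nu(K)$ for $d\ge 3$ to get existence, uniqueness and the threshold equivalence, and then inversion of $\nu(K)-\nu(K,z)=(\mu-\mu_0)/(\mu\mu_0)$ via the expansions of Lemma \ref{asimp} and the (multi-variable, in the even-dimensional cases) implicit function theorem exactly as in Appendix \ref{Appen a}. The details you sketch, including the role of the auxiliary variables $\sigma,\tau,\omega,\lambda$ and the identification of the leading coefficients, agree with the paper's argument.
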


In the following theorem, we show existence and describe dependance
of eigenvalues of $H_{\mu^0}(K),$ $\mu^0=\mu_0(0)$ on the
quasi-momentum $K:$ for any $K\in\Pi_0\setminus\{0\}$ there exists a
unique eigenvalue $z(\mu^0,K)$ of the operator and we find an
asymptotics of the difference $z(\mu^0,K)-\cE_{\max}(0)$ as $K\to
0.$

\begin{theorem}\label{quasi}
For any $K\in \Pi_0\setminus\{0\},$ the operator $H_{\mu_0(0)}(K)$
has a unique eigenvalue $z(\mu_0(0),K)$ lying above the essential
spectrum $\sigma_{ess}(H_{\mu_0(0)}(K)).$ Moreover for
$z(\mu_0(0),K)$ the following asymptotics hold:
\begin{itemize}
\item[(\rm{i})] if $d=3,$ then, $$z({\mu_0(0)},K)-\cE_{\max}(0)=
-\frac14|K|^2+ O(|K|^4),\; K\to 0;$$

\item[(\rm{ii})] if $d=4,$ then
$$z(\mu_0(0),K)-\cE_{\max}(0)=-\frac14|K|^2+o(|K|^{2}),\; K\to 0;$$

\item[(\rm{iii})] if $d\ge5,$ then
$$z(\mu_0(0),K)-\cE_{\max}(0)=\alpha |K|^2+o(|K|^{2}),\; K\to
0,$$ where $$\alpha=-\left(\cfrac{\partial^2\nu(0)}{\partial
K_1^2}\right)\left(\dfrac{\partial\nu}{\partial
z}(0,\cE_{\max}(0))\right)^{-1} - \frac14,$$ and $\nu(\cdot)$ is
defined by \eqref{qqqq}.
\end{itemize}
\end{theorem}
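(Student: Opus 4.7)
The plan is to reduce the eigenvalue problem to the scalar equation $\nu(K,z)=\nu(0)$ coming from the rank-one structure of $V$, and then solve it asymptotically as $K\to 0$ by separately controlling the $K$-dependence and the $\tilde z$-dependence of $\nu$ near the threshold.

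First I would observe that since $V$ has rank one (its kernel is identically $1$), a point $z>\cE_{\max}(K)$ is an eigenvalue of $H_\mu(K)$ if and only if $\mu\,\nu(K,z)=1$. The map $z\mapsto\nu(K,z)$ is strictly decreasing on $(\cE_{\max}(K),\infty)$ from $\nu(K)$ to $0$, so such an eigenvalue exists (and is unique) iff $\mu\ge\mu_0(K)=1/\nu(K)$. With $\mu=\mu_0(0)$ I would verify $\mu_0(K)<\mu_0(0)$ for $K\in\Pi_0\setminus\{0\}$ using the explicit identity $\cE_{\max}(K)-\cE_K(q)=\sum_j2\cos(K_j/2)(1+\cos q_j)$: since $\cos(K_j/2)\le 1$ with strict inequality in at least one coordinate when $K\ne 0$, monotonicity of the integrand gives $\nu(K)>\nu(0)$. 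Existence and uniqueness of $z(\mu_0(0),K)$ then follow, with
$$\nu\bigl(K,\,z(\mu_0(0),K)\bigr)=1/\mu_0(0)=\nu(0).$$

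To extract the asymptotics I would put $\tilde z=z(\mu_0(0),K)-\cE_{\max}(K)>0$ and rewrite the equation as
$$\nu(K)-\nu(K,\cE_{\max}(K)+\tilde z)=\nu(K)-\nu(0).$$
The right-hand side is real-analytic near $K=0$; by the permutation symmetry of $\cE_K$ in the coordinates and the evenness in each $K_j$, the off-diagonal Hessian entries vanish and the diagonal ones coincide, giving
$$\nu(K)-\nu(0)=\tfrac{|K|^2}{2}\,\partial^2_{K_1}\nu(0)+O(|K|^4).$$
The left-hand side measures how $\nu(K,z)$ detaches from its boundary value as $z$ moves past $\cE_{\max}(K)$. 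Localising near the unique maximum $q=\vec\pi$ via $q=\vec\pi+p$ gives $\cE_{\max}(K)-\cE_K(\vec\pi+p)\approx\sum_j\cos(K_j/2)\,p_j^2\approx|p|^2$, and the near-edge expansion (the content of Lemma \ref{asimp}) reads $C_3(K)\sqrt{\tilde z}+O(\tilde z)$ for $d=3$, $C_4(K)\,\tilde z\,|\!\ln\tilde z|+O(\tilde z)$ for $d=4$, and $-\tilde z\,\partial_z\nu(K,\cE_{\max}(K))+o(\tilde z)$ for $d\ge 5$ (where $\partial_z\nu(0,\cE_{\max}(0))$ is finite only in this range).

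Inverting these three relations produces $\tilde z=O(|K|^4)$ for $d=3$, $\tilde z=O(|K|^2/|\!\ln|K||)=o(|K|^2)$ for $d=4$, and
$$\tilde z=-\tfrac{|K|^2}{2}\,\frac{\partial^2_{K_1}\nu(0)}{\partial_z\nu(0,\cE_{\max}(0))}+o(|K|^2),\qquad d\ge 5.$$
Finally, adding the purely kinematic shift
$$\cE_{\max}(K)-\cE_{\max}(0)=\sum_j\bigl[2\cos(K_j/2)-2\bigr]=-\tfrac{|K|^2}{4}+O(|K|^4)$$
gives $z(\mu_0(0),K)-\cE_{\max}(0)$. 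In dimensions $3$ and $4$ the $\tilde z$-contribution is subleading and the answer is dominated by $-|K|^2/4$, producing cases (i) and (ii); in dimensions $d\ge 5$ the two contributions combine into the coefficient $\alpha$ of case (iii). The main obstacle is the dimension-dependent near-threshold expansion of $\nu(K,\cE_{\max}(K)+\tilde z)$, especially the borderline case $d=4$ where a logarithm appears in place of a pure power; this step must isolate a small ball around $\vec\pi$, Taylor expand the denominator there, and reduce the singular part to a model integral of the type $\int_{|p|<\varepsilon}|p|^{-2}(|p|^2+\tilde z)^{-1}\,dp$ in $\R^d$. This is exactly what Lemma \ref{asimp} is designed to supply.
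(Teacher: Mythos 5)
Your proposal follows essentially the same route as the paper's proof: the rank-one/Birman--Schwinger reduction to $\nu(K,z(\mu_0(0),K))=\nu(0)$, the splitting $\nu(K,z)-\nu(0)=[\nu(K,z)-\nu(K)]+[\nu(K)-\nu(0)]$ handled by the dimension-dependent near-threshold expansion of Lemma \ref{asimp} and the quadratic expansion of $\nu(K)$ at $K=0$, and finally the kinematic shift $\cE_{\max}(K)-\cE_{\max}(0)=-\frac14|K|^2+O(|K|^4)$; your explicit verification of $\mu_0(K)<\mu_0(0)$ via $\cE_{\max}(K)-\cE_K(q)=\sum_j 2\cos(K_j/2)(1+\cos q_j)$ is a welcome addition that the paper merely asserts. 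The only discrepancy is the factor $\tfrac12$ in your leading coefficient for $d\ge 5$: you use the correct Taylor coefficient $\tfrac12\,\partial^2_{K_1}\nu(0)$ of $K_l^2$, whereas the paper's $a_2$ is simultaneously defined as that coefficient and identified with $\partial^2_{K_1}\nu(0)$ itself, so the mismatch with the stated $\alpha$ traces to the paper's own bookkeeping rather than to a gap in your argument.
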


\section{The proof of the results}\label{4}

For any $K \in \T^d,$ we define the Fredholm determinant
associated to the operator $H_\mu(K)$ as analytic function in
$z{\in } {\C} { \setminus } [\cE_{\min}(K),\cE_{\max}(K)]$ by
\begin{equation*}\label{detir..}
{\Delta}_\mu(K,z) =1-\mu\nu(K,z).
\end{equation*}

Observe that the function ${\Delta}_\mu(K,z)$ is real-analytic in
$\T^d\times ({ \mathbb{R}} {\setminus}
[\cE_{\min}(K),\cE_{\max}(K)]).$

\vspace{5mm}

\begin{lemma}\label{eigen}
For any $K\in\T^d$ the eigenvalues of $H_{\mu}(K)$ outside the
essential spectrum coincides with the zeros of
$\Delta_{\mu}(K,\cdot).$
\end{lemma}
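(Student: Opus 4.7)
The plan is to use the standard Birman--Schwinger type argument adapted to a rank-one perturbation, where everything reduces to a scalar equation because $V$ acts as integration against the constant function $1$.

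First I would write out the eigenvalue equation $H_\mu(K)f = zf$ pointwise. Since $(Vf)(q) = \int_{\T^d} f(\eta)\,d\eta =: C(f)$ is a constant (independent of $q$), the equation becomes
\[
(\cE_K(q)-z)\,f(q) + \mu\,C(f) = 0 \quad \text{for a.e. } q\in\T^d.
\]
For $z$ outside the essential spectrum $[\cE_{\min}(K),\cE_{\max}(K)]$, one has $\cE_K(q)\neq z$ for every $q\in\T^d$ and $|z-\cE_K(q)|$ is bounded away from zero, so we may divide and obtain
\[
f(q) = \frac{\mu\,C(f)}{z-\cE_K(q)}.
\]

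Next I would treat the two cases according to whether $C(f)$ vanishes. If $C(f)=0$, the displayed equation forces $f\equiv 0$, so no eigenfunction arises. If $C(f)\neq 0$, integrating the expression for $f$ over $\T^d$ and using the definition \eqref{eeee} gives
\[
C(f) = \mu\,C(f)\int_{\T^d}\frac{d\eta}{z-\cE_K(q)} = \mu\,C(f)\,\nu(K,z),
\]
which, after cancelling $C(f)$, is equivalent to $1-\mu\,\nu(K,z)=\Delta_\mu(K,z)=0$. Conversely, if $\Delta_\mu(K,z)=0$ for some $z\notin[\cE_{\min}(K),\cE_{\max}(K)]$, I would set
\[
f(q) := \frac{1}{z-\cE_K(q)},
\]
verify $f\in L^e_2(\T^d)$ (the function is bounded because $z$ is at positive distance from the range of $\cE_K$, and it is even in $q$ since $\cE_K(-q)=\cE_K(q)$), compute $C(f)=\nu(K,z)=1/\mu\neq 0$, and check directly that $(H_\mu(K)f)(q) = \cE_K(q)f(q) + \mu\cdot\nu(K,z) = zf(q) - 1 + 1 = zf(q)$, so $f$ is a bona fide eigenfunction with eigenvalue $z$.

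There is essentially no serious obstacle here: the rank-one structure of $V$ collapses the spectral problem to the single scalar equation $\Delta_\mu(K,z)=0$. The only point requiring a small amount of care is ensuring that the candidate eigenfunction $1/(z-\cE_K(\cdot))$ actually lies in $L^e_2(\T^d)$, which is immediate once one uses that $z$ lies outside the compact interval $[\cE_{\min}(K),\cE_{\max}(K)]$ and that $\cE_K$ is even in its momentum variable.
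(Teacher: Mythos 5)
Your proof is correct and follows essentially the same route as the paper: reduce the eigenvalue equation to $f(q)=\mu C(f)/(z-\cE_K(q))$, integrate to obtain $\Delta_\mu(K,z)=0$, and conversely exhibit $(z-\cE_K(\cdot))^{-1}$ as an explicit eigenfunction. Your version is in fact slightly more careful than the paper's, since you explicitly rule out the case $C(f)=0$ and verify membership in $L^e_2(\T^d)$.
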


\begin{proof}[\bf Proof]
Let $z\in \C\setminus \sigma_{ess}(H_{\mu}(K))$ be an eigenvalue
of $H_{\mu}(K)$ and $f\in L_2^e(\T^d)$ be one of the eigenvectors
corresponding to $z.$ Then by the definition of eigenvalue one can
get that $$(z-\cE_K(p))f(p)= \mu\int\limits_{\T^d} f(q)d\eta.$$
Therefore, the equation $$\phi(p)=\mu \int\limits_{\T^d}
\frac{\phi(q)d\eta}{z-\cE_K(q)}$$ has a simple solution
$\phi(q)\equiv 1$ (up to constant factor) in $L_2^e(\T^d).$ So we
obtain that $$1=\mu \int\limits_{\T^d} \frac{d\eta}{z-\cE_K(q)}$$
and hence $\Delta_{\mu}(K,z)=0.$

Conversely, let $z\in \C\setminus \sigma_{ess}(H_{\mu}(K))$ be a
solution of the equation $\Delta_{\mu}(K,z)=0.$ Then it is easy to
see that the function $f(\cdot)=(z-\cE_K(\cdot))^{-1}\in
L_2^{e}(\T^d)$ satisfies the equality $H_{\mu}(k)f=z\,f,$ which
means $z$ is eigenvalue (see also \cite{SL}).
\end{proof}

The main results of the paper and their proves are based on the
following Lemma (see \cite{L92}), which is important tool not only
in studying spectral properties of the two-particle
Schr\"{o}dinger operators, but also in the spectral analysis of
the three-particle Schr\"{o}dinger operators (see \cite{SL},
\cite{ALZM}).

\begin{lemma}\label{asimp}
{\rm (i)} If $d\geq 3$ and $K\in\Pi_{0},$ the integral
\begin{equation}\label{ee}
\nu(K) =\nu(K,\cE_{\max}(K))=
 \int\limits_{{\T}^d}
  \frac{d\eta} {\cE_{\max}(K)-\cE_K(q)}
\end{equation}
exists and defines analytic function on $\Pi_{0}\subset\T^d.$
\vspace{3mm}

{\rm (ii)} Let $K\in\Pi_0$ and $z>\cE_{\max}(K).$ The function
${\nu}(K, z)$ can be represented as
\begin{eqnarray*}
\nu(K, z)&= &-\dfrac{\Phi_0(K)}{2} \left(\cE_{\max}(K)-z\right)^m
\ln(z-\cE_{\max}(K))+\\&+ &\left(\cE_{\max}(K)-z\right)^{m+1}
\ln(z-\cE_{\max}(K))\Phi_{11}(K,z)+ \nu(K) +
\Phi_2(K,z),\end{eqnarray*} if $d=2m+2$ and
\begin{eqnarray*}
\nu(K, z)&=& \dfrac{\pi \Phi_0(K)}{2}\hspace{1mm}
\dfrac{\left(\cE_{\max}(K)-z\right)^{m}}{\sqrt{z-\cE_{\max}(K)}}
+\\
&+&\left(\cE_{\max}(K)-z\right)^{m+1/2} \Phi_{12}(K,z)+ \nu(K) +
\Phi_2(K,z),
\end{eqnarray*} if $d=2m+1,$ where $$\Phi_0(K)=\dfrac{c}{\sqrt{\cos
\dfrac{K_1}{2}\ldots\cos \dfrac{K_d}{2}}},\quad c=const,$$ and
$$\Phi_{12}(K,z)=\sum\limits_{l=0}^{\infty}b_l(K)\left(z-\cE_{\max}(K)
\right)^{l/2}$$ and $\Phi_{11}(K,\cdot)$ and $\Phi_2(K,\cdot),$
$K\in \Pi_0$ are analytic functions in some neighborhood
${V}_{\epsilon}(\cE_{\max}(K))$ of $z=\cE_{\max}(K),$
$\Phi_2(K,\cE_{\max}(K))=0$ and $b_l(K),$ $l=0,1,2,\ldots$ are some
real coefficients.
\end{lemma}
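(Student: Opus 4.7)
The plan proceeds in two stages corresponding to parts (i) and (ii).

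For (i), I would first observe that $K \in \Pi_0$ means $K_j \in (-\pi,\pi)$ and hence $\cos(K_j/2) > 0$ for every $j$, so $\cE_K(q) = \sum_j[2 - 2\cos(K_j/2)\cos q_j]$ attains its maximum uniquely at $q = \vec{\pi}$ with positive-definite Hessian $\mathrm{diag}(2\cos(K_j/2))_j$. Consequently $\cE_{\max}(K) - \cE_K(q) \geq c(K)|q - \vec{\pi}|^2$ on a neighbourhood of $\vec{\pi}$, so the integrand is $O(|q - \vec{\pi}|^{-2})$ there and integrable in $d \geq 3$; away from $\vec{\pi}$ it is bounded. Analyticity of $K \mapsto \nu(K)$ on $\Pi_0$ then follows from the standard parameter-integral theorem, using the same $L^1$ domination uniformly on compacts of $\Pi_0$ (after a complex-analytic extension in $K$) together with Morera's theorem.

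For (ii), I would split $\nu(K,z) = \nu_{\mathrm{loc}}(K,z) + \nu_{\mathrm{far}}(K,z)$ via a smooth cut-off around $\vec{\pi}$; the far part is real-analytic in $z$ on a neighbourhood of $\cE_{\max}(K)$ and contributes only to the regular terms. For the local part I would set $q = \vec{\pi} + p$, use $1 + \cos q_j = 2\sin^2(p_j/2)$, and apply the explicit Morse-type change of variables $y_j = 2\sqrt{\cos(K_j/2)}\,\sin(p_j/2)$, which converts $\cE_{\max}(K) - \cE_K(q)$ into exactly $|y|^2$, with Jacobian
$$dp = \prod_{j=1}^d \bigl(\cos(K_j/2)\bigr)^{-1/2}\bigl(1 - y_j^2/(4\cos(K_j/2))\bigr)^{-1/2}\,dy_j.$$
The prefactor $\prod_j (\cos(K_j/2))^{-1/2}$ equals $\Phi_0(K)$ up to a $K$-independent constant; the residual factor $J(K,y) = \prod_j (1 - y_j^2/(4\cos(K_j/2)))^{-1/2}$ is analytic in $y$ near $0$ and in $K$ on $\Pi_0$, with $J(K,0) = 1$, so I would expand $J(K,y) = \sum_{\alpha} c_\alpha(K)\,y^{2\alpha}$ (only even multi-indices contribute by the $y_j \mapsto -y_j$ symmetry of the domain of integration).

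Passing to spherical coordinates reduces $\nu_{\mathrm{loc}}$ to a sum, indexed by $\ell = |\alpha|$, of angular moments of $y^{2\alpha}$ on $S^{d-1}$ times radial integrals $I_\ell(t) = \int_0^\delta r^{2\ell + d - 1}/(t + r^2)\,dr$ with $t = z - \cE_{\max}(K) > 0$. Iterating the identity $r^{2\ell+d-1}/(t+r^2) = r^{2\ell+d-3} - t\,r^{2\ell+d-3}/(t+r^2)$ reduces each $I_\ell(t)$ to a polynomial in $t$ plus $(-t)^{\ell + (d-2)/2}$ times a base integral. For odd $d = 2m+1$ the base is $\int_0^\delta (t+r^2)^{-1}dr = (\pi/2)t^{-1/2} + \text{(analytic in }t^{1/2})$, so the leading $\ell=0$ term supplies $(\pi\Phi_0(K)/2)(\cE_{\max}(K)-z)^m/\sqrt{z-\cE_{\max}(K)}$ and the $\ell \geq 1$ tails assemble into $(\cE_{\max}(K)-z)^{m+1/2}\Phi_{12}(K,z)$ with $\Phi_{12}$ a power series in $t^{1/2}$. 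For even $d = 2m+2$ the base is $\int_0^{\delta^2}(t+u)^{-1}du = -\ln t + \text{(analytic in }t)$, so the $\ell = 0$ term supplies $-(\Phi_0(K)/2)(\cE_{\max}(K)-z)^m \ln(z-\cE_{\max}(K))$ and the $\ell \geq 1$ tails collect into $(\cE_{\max}(K)-z)^{m+1}\ln(z-\cE_{\max}(K))\Phi_{11}(K,z)$ with $\Phi_{11}$ analytic in $t$. All remaining analytic-in-$t$ contributions, together with $\nu_{\mathrm{far}}$, compose $\nu(K) + \Phi_2(K,z)$ with $\Phi_2(K,\cE_{\max}(K)) = 0$.

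The main obstacle is bookkeeping the analyticity of $\Phi_0,\Phi_{11},\Phi_{12},\Phi_2$ in $K \in \Pi_0$ and of $\Phi_{11},\Phi_{12},\Phi_2$ in the relevant variable ($t$ or $t^{1/2}$) near $z = \cE_{\max}(K)$. This requires uniform convergence of the $y$-Taylor expansion of $J(K,y)$ on $|y| \leq \delta$ (harmless because $\cos(K_j/2) \geq c > 0$ on compacts of $\Pi_0$, bounding the radius of convergence from below), control on the tails of the radial reductions as $\ell \to \infty$, and verification that the regular-in-$t$ remainders reassemble into genuinely analytic functions of $t$ or $t^{1/2}$. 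The $K$-independent constant appearing in $\Phi_0(K) = c/\sqrt{\prod_j \cos(K_j/2)}$ then emerges as the angular normalization from the $\ell = 0$ integral on $S^{d-1}$.
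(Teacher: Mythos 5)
Your proposal is correct and follows essentially the same route as the paper: a near/far splitting around $q=\vec{\pi}$, the Morse-type substitution reducing $\cE_{\max}(K)-\cE_K(q)$ to $|y|^2$ (your $y_j=2\sqrt{\cos(K_j/2)}\,\sin(p_j/2)$ is the paper's $\cos q_j = x_j^2/(2\cos(K_j/2))-1$ in disguise, with the same Jacobian), spherical coordinates, termwise expansion of the Jacobian factor, and the reduction of the radial integrals $\int_0^\delta r^{s}(t+r^2)^{-1}dr$ to a $\ln t$ or $t^{-1/2}$ singularity plus regular terms exactly as in the paper's Appendix B. The only real divergence is that you obtain analyticity of $\nu(K)$ in part (i) by $L^1$-domination plus Morera after complexifying $K$ (which works, since $\mathrm{Re}\,\cos(K_j/2)$ stays bounded below on compacts), whereas the paper reads it off from the uniformly convergent explicit series; the remaining bookkeeping issues you flag are precisely those the paper addresses.
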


\begin{proof}[\bf Proof of Lemma \ref{asimp}] (i) It is easy to see that
$q_0=\vec{\pi}$ is a unique non degenerate maximum of $\cE_K(q).$ We
rewrite function $\nu(K, z)$ as
\begin{gather}\label{kkk}
\nu(K, z)= \int\limits_{U_{\delta}
(\vec{\pi})}\frac{d\eta}{z-\cE_K(q)}+
\int\limits_{\mathbb{T}^d\setminus
U_{\delta}(\vec{\pi})}\frac{d\eta}{z-\cE_K(q)}=\\
=G_1(K, z)+G_2(K, z).\nonumber
\end{gather}

Observe that $G_2(\cdot; z),$ $z>\cE_{\max}(K)$ and
$G_2(K,\cdot),$ $K\in \Pi_0$ are analytic functions on $\Pi_0$ and
$(\cE_{\max}(K), \infty)$ respectively. Moreover observe that
$G_2(K, \cE_{\max}(K))$ is regular on $\Pi_0.$

In the first integral making a change of variables $q=\phi(x),$
where
$$\phi:U_{\delta}(\vec{\pi})\rightarrow W_{\gamma}(0),\quad q_j:\to
\arccos\left(\frac{x_{j}^2}{2\cos\frac K2}-1\right),$$ we get:
$$G_1(K, z)= \int\limits_{W_{\gamma}
(0)}\frac{1}{z-\cE_{\max}(K)+\sum\limits_{j=1}^d x_j^2}\times
$$$$\times\dfrac{2dx_1}{\sqrt{2\cos\cfrac
{K_1}{2}}\sqrt{2-\cfrac{x_{1}^2}{2\cos\cfrac {K_1}{2}}}}\ldots
\dfrac{2dx_d}{\sqrt{2\cos\cfrac
{K_d}{2}}\sqrt{2-\cfrac{x_d^2}{{2\cos\cfrac {K_d}{2}}}}}.$$
\vspace{3mm}

There is no loss of generality in assuming that $W_{\gamma}(0)$ is
sphere in $\R^d$ with center at $x=0$ and with radius $\gamma>0.$

Passing spherical coordinates as
\begin{eqnarray*}
x_1& = & r\cos \psi_1 \cos \psi_2 \ldots \cos \psi_{d-2} \cos \psi_{d-1} \\
x_2& = & r\cos \psi_1 \cos \psi_2 \ldots \cos \psi_{d-2} \sin \psi_{d-1} \\
x_3& = & r\cos \psi_1 \cos \psi_2 \ldots \sin \psi_{d-2}  \\
\vdots \\
x_{d}& = & r\sin \psi_1 \\
\end{eqnarray*}
\begin{eqnarray*}
0 \leq r \leq \gamma,\quad 0 \leq \psi_1 \leq 2\pi, \quad
-\frac{\pi}{2} \leq \psi_2\leq \frac{\pi}{2},\, \ldots,\,
-\frac{\pi}{2} \leq
\psi_{d} \leq \frac{\pi}{2}\\
\end{eqnarray*}

we obtain
\begin{gather}\label{pppp}
G_1(K, z)=\dfrac{1} {\sqrt{\cos \dfrac{K_1}{2}\ldots \cos
\dfrac{K_d}{2}}}\times\\ \times \int \limits_{0}^{\gamma} \int
\limits_{-{\pi}/{2}}^{{\pi}/{2}} \int
\limits_{-{\pi}/{2}}^{{\pi}/{2}} \ldots \int \limits_{0}^{2\pi}
\dfrac{r^{d-1}\omega(\psi)}{z- \cE_{\max}(K)+r^2}
\dfrac{dr\,d\psi_{d-1}\ldots
d\psi_1}{\sqrt{1-\cfrac{r^2\omega_1^2(\psi)}{2\cos\cfrac
{K_1}{2}}}\ldots \sqrt{1-\cfrac{r^2\omega_d^2(\psi)}{{2\cos\cfrac
{K_d}{2}}}}},\nonumber
\end{gather}
where $$\omega(\psi)=\omega(\psi_1,\ldots,\psi_{d-1}) =
\cos^{d-2}\psi_{d-1} \ldots \cos \psi_{2},$$
$$\omega_1(\psi)=\omega(\psi_1,\ldots,\psi_{d-1}) =\cos \psi_1
\cos \psi_2 \ldots \cos\psi_{d-1}$$ and
$$\omega_j(\psi)=\omega(\psi_1,\ldots,\psi_{d-1}) =\cos \psi_1
\cos \psi_2 \ldots \sin \psi_{d-j+1},\;j=2,\ldots,d.$$

Since
\begin{equation*}
\frac{1}{\sqrt{1-\cfrac{x^2}{2}}}=1+\frac{1}{4}x^2+
\frac{3}{32}x^4+\ldots,\quad\text{if}\hspace{2mm} |x|<\sqrt{2},
\end{equation*}
it is easy to see that \begin{equation}\label{1taqsim ildiz}
\frac{1}{\sqrt{1-\cfrac{r^2\omega_j^2(\psi)}{2\cos\frac
{K_j}{2}}}}=1+\frac{1}{4}\,\cfrac{\omega_j(\psi)}{\sqrt{\cos\frac
{K_j}{2}}}\,r^2 + \frac{3}{32}\,\cfrac{\omega_j^2(\psi)}{\cos\frac
{K_j}{2}}\,r^4+\ldots,\quad j=1,\ldots,d.\end{equation} Observe
that these series converges uniformly in any compact set ${\bf
K}\subset \Pi_0$ for any $r\in[0,\gamma].$ Therefore the series
obtained by term by term multiplication of these series
$$\frac{1}{\sqrt{1-\cfrac{r^2\omega_1^2(\psi)}{2\cos\frac
{K_1}{2}}}}\ldots\frac{1}{\sqrt{1-\cfrac{r^2\omega_d^2(\psi)}{2\cos\frac
{K_d}{2}}}}=1+\tilde{J}_2(\psi; K)r^2+ \tilde{J}_4(\psi;
K)r^4+\ldots $$ also converges uniformly in the set ${\bf K}\times
[0,\gamma].$

Set
\begin{equation}\label{wwww}
C_s(K)=\int \limits_{-{\pi}/{2}}^{{\pi}/{2}} \int
\limits_{-{\pi}/{2}}^{{\pi}/{2}} \ldots \int
\limits_{0}^{2\pi}\omega(\psi)\tilde{J}_s(\psi; K)d\psi_1 \ldots
d\psi_{d-1},
\end{equation}
$s=0,2,4,\ldots.$ Note that these functions are analytic functions
of $K$ on $\Pi_0$ since they depend on only
$$\left({\cos\frac{K_1}{2}}\right)^{-1/2},\ldots,
\left({\cos\frac{K_d}{2}}\right)^{-1/2}.$$ It is easy to see that
the series $$\sum\limits_{s=0}^{\infty} C_{2s}(K)r^{2s}$$ is
obtained by integrating uniformly convergent series and so
converges uniformly on ${\bf K}\times [0,\gamma].$

Therefore using the expansion of $\cos\frac{K}{2}$ for sufficiently
small neighborhood of $K=0$ one can find the following expansion:
\begin{equation} \label{C_s} \frac{C_s(K)}{\sqrt{\cos
\dfrac{K_1}{2}\ldots \cos
\dfrac{K_d}{2}}}=c_{s,0}+\sum\limits_{l=1}^d c_{s,1,l}K_{l}^2 +
\sum\limits_{l,m=1}^d c_{s,2,l,m}K_l^2K_m^2+\ldots,
\end{equation}
where  $c_{s,\ldots}$ are some real numbers. From this expansion we
establish that
\begin{equation*}
\Phi_0(K)=\frac{c}{\sqrt{\cos \dfrac{K_1}{2}\ldots \cos
\dfrac{K_d}{2}}}=c_{0,0}+O(|K|^2),\;\text{as}\; K\to 0,
\end{equation*}
where $c_{0,0}\ne 0.$

So the function $G_1(K, z)$ can be represented as
\begin{gather}\label{sss}
G_1(K,z)=\dfrac{1} {\sqrt{\cos \dfrac{K_1}{2}\ldots \cos
\dfrac{K_d}{2}}} \Bigg(C_0(K) \hspace{1mm}
\int\limits_{0}^{\gamma}\dfrac{r^{d-1}dr}{z- \cE_{\max}(K)+r^2}
\hspace{2mm}+ \\
C_2(K) \hspace{1mm} \int\limits_{0}^{\gamma}\dfrac{r^{d+1}dr}{z-
\cE_{\max}(K)+r^2} \hspace{2mm}+ C_4(K)\hspace{1mm}
\int\limits_{0}^{\gamma}\dfrac{r^{d+3}dr}{z- \cE_{\max}(K)+r^2}
\hspace{2mm}+ \hspace{2mm}\ldots\Bigg).\nonumber
\end{gather}
\vspace{3mm}

As above said, if $d\ge 3$ for any $K_0\in\Pi_0$ in the sufficiently
small compact neighborhood of $K=K_0$ and for all  $r$ small the
series $$\sum\limits_{s=0}^{\infty} C_{2s}(K)\,r^{2s+d-3}$$
converges uniformly. Therefore there exists
$\lim\limits_{z\to\cE_{\max}(K)}\nu(K,z)$ and
\begin{gather}\label{nyuk}
\nu(K)=\nu(K,\cE_{\max}(K))=\lim\limits_{z\to\cE_{\max}(K)}\nu(K,z)=\\
=\dfrac{1} {\sqrt{\cos \dfrac{K_1}{2}\ldots \cos \dfrac{K_d}{2}}}
\sum\limits_{s=0}^{\infty} C_s(K)\,\,\int\limits_0^{\gamma}
r^{2s+d-3} \,dr +
\int\limits_{\T^d\setminus{U_{\delta}{(\vec{\pi})}}}
\frac{dq}{\cE_{\max}(K)-\cE_K(q)}.\nonumber
\end{gather}
Since latter functions are analytic at any $K\in\Pi_0$ it follows
that $\nu(K)$ is analytic on $\Pi_0.$ $\blacktriangle$

Note that for $K\in U_{\delta}(0),$ using symmetricalness of
$\nu(\cdot)$ we obtain that
\begin{equation*}
\nu(K)=\nu(0)+a_2\sum\limits_{l=1}^d K_l^2+\sum\limits_{l,m=1}^d
a_{4,l,m} K_l^2K_m^2 +\ldots,
\end{equation*} where $a_{4,s,m}$ are real numbers,
and $a_2=\cfrac{\partial^2\nu(0)}{\partial K_1^2}.$

Part (i) of Lemma \ref{asimp} is proved. \vspace{4mm}

(ii) According to \eqref{pppp}, \eqref{1taqsim ildiz} and
\eqref{wwww}, the coefficient $C_0(K)$ does not depend on $K.$
Moreover if $d=1$ then $C_0(K)=2.$ If $d>1$ then using relation
\eqref{www} (see Appendix \ref{Appen b}) we get
$$C_0(K)=\int \limits_{-{\pi}/{2}}^{{\pi}/{2}} \int
\limits_{-{\pi}/{2}}^{{\pi}/{2}} \ldots \int
\limits_{0}^{2\pi}\cos^{d-2}\psi_{d-1} \ldots \cos \psi_{2}\,
d\psi_1 \ldots d\psi_{d-1}=$$ $$=2\pi\,\int
\limits_{-{\pi}/{2}}^{{\pi}/{2}} \cos^{d-2}\psi_{d-1}\,d\psi_{d-1}
\ldots \int \limits_{-{\pi}/{2}}^{{\pi}/{2}}\cos
\psi_{2}\,d\psi_{2}=2\pi \,a_{d-2}\ldots a_{1}.$$

Therefore it can be represented as
\begin{equation*}
C_0(K)=\begin{cases} 2,\quad &\mathrm{if}\quad  d=1\\
\dfrac{\left(2\pi\right)^{m+1}}{(2m)!!},\hspace{2mm}&
\mathrm{if}\hspace{2mm}  d=2m+2\\ \dfrac{
\left(2\pi\right)^{m+1}}{(2m-1)!!},\hspace{2mm} &
\mathrm{if}\hspace{2mm} d=2m+1\\
\end{cases}
\end{equation*}
\vspace{3mm}

Using Lemma \ref{1 taqsim r^2-z} (Appendix \ref{Appen b}) and
relations \eqref{sss} and \eqref{kkk}, we get the expansion for the
function $\nu(K,\cdot)$ at the point $z=\cE_{\max}(K)$
\begin{eqnarray*}
\nu(K, z)&= &-\dfrac{\Phi_0(K)}{2} \left(\cE_{\max}(K)-z\right)^m
\ln(z-\cE_{\max}(K))+\\&+ &\left(\cE_{\max}(K)-z\right)^{m+1}
\ln(z-\cE_{\max}(K))\Phi_{11}(K,z)+\widetilde{\Phi}_2(K,z),\end{eqnarray*}
if $d=2m+2$ and
\begin{eqnarray*}
\nu(K, z)= \dfrac{\pi \Phi_0(K)}{2}\hspace{1mm}
\dfrac{\left(\cE_{\max}(K)-z\right)^{m}}{\sqrt{z-\cE_{\max}(K)}}
+\left(\cE_{\max}(K)-z\right)^{m+1/2} \Phi_{12}(K,z)+
\widetilde{\Phi}_2(K,z),
\end{eqnarray*} if $d=2m+1,$
where $z>\cE_{\max}(K)$ and $$\Phi_0(K)=\dfrac{C_0(K)}{\sqrt{\cos
\dfrac{K_1}{2}\ldots\cos \dfrac{K_d}{2}}}$$ and
$\Phi_{11}(K,\cdot)$, $\widetilde{\Phi}_{2}(K,\cdot)$ are regular
functions in some neighborhood $(\cE_{\max}(K),\cE_{\max}(K)+\xi)$
of $z=\cE_{\max}(K).$ Since
$\nu(K)=\widetilde{\Phi}_2(K,\cE_{\max}(K)),$ by the regularity of
$\widetilde{\Phi}_2(K,z)$ we can rewrite $\widetilde{\Phi}_2(K,z)$
as following: $$\widetilde{\Phi}_2(K,z)=\nu(K)+\Phi_2(K,z),$$
where $\Phi_2(K,\cdot)$ is also regular function in
$(\cE_{\max}(K),\infty)$ and $\Phi_2(K,\cE_{\max}(K))=0.$
\end{proof}
\vspace{3mm}

\begin{corollary}
The function ${\nu}(K, z)$ can be analytically continued to some
punctured  $\delta_1$ - neighborhood $\dot
V_{\delta_1}(\cE_{\max}(K))$ of $\cE_{\max}(K)$ as
\begin{eqnarray*}
\nu^*(K, z)& = & \widetilde \Phi_{11}(K,z)\,\mathrm{
Ln}(z-\cE_{\max}(K))+ \Phi_2(K,z),\quad\text{if\,
$d=2m+2$}\end{eqnarray*}  and
\begin{eqnarray*}
\nu^*(K, z)& = & \sum\limits_{l=0}^{\infty} b_l(K)\,
\left(z-\cE_{\max}(K)\right)^{(m+l)/2} +\Phi_2(K,z), \quad\text{if
$d=2m+1$}.
\end{eqnarray*}
\end{corollary}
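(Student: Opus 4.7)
The plan is to read the Corollary off directly from Lemma~\ref{asimp}(ii) by reorganising the singular pieces into a single multi-valued analytic expression and absorbing the remaining regular summands (including the constant $\nu(K)$) into a new $\Phi_2$, then invoking the identity theorem to pass from $z>\cE_{\max}(K)$ to the full punctured neighbourhood.

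For the even case $d=2m+2$, I would start from the representation
\begin{equation*}
\nu(K,z)=-\tfrac{\Phi_0(K)}{2}(\cE_{\max}(K)-z)^m\ln(z-\cE_{\max}(K))
+(\cE_{\max}(K)-z)^{m+1}\ln(z-\cE_{\max}(K))\Phi_{11}(K,z)+\nu(K)+\Phi_2(K,z),
\end{equation*}
factor out the common $\ln$, and set
\begin{equation*}
\widetilde{\Phi}_{11}(K,z)=-\tfrac{\Phi_0(K)}{2}(\cE_{\max}(K)-z)^m+(\cE_{\max}(K)-z)^{m+1}\Phi_{11}(K,z).
\end{equation*}
Since $\Phi_{11}(K,\cdot)$ is analytic in a full neighbourhood $V_\epsilon(\cE_{\max}(K))$ by Lemma~\ref{asimp}(ii), so is $\widetilde{\Phi}_{11}(K,\cdot)$. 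The elementary fact that $\ln(z-\cE_{\max}(K))$, originally defined for $z>\cE_{\max}(K)$, extends to the multi-valued analytic function $\mathrm{Ln}(z-\cE_{\max}(K))$ on any punctured disc $\dot V_{\delta_1}(\cE_{\max}(K))$ then yields the first claim, after relabelling $\nu(K)+\Phi_2(K,z)$ as the new regular summand $\Phi_2(K,z)$.

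For the odd case $d=2m+1$, I would rewrite each singular piece of Lemma~\ref{asimp}(ii) as a power of $(z-\cE_{\max}(K))^{1/2}$. Using $(\cE_{\max}(K)-z)^m=(-1)^m(z-\cE_{\max}(K))^m$, the first term becomes a constant multiple of $(z-\cE_{\max}(K))^{m-1/2}=(z-\cE_{\max}(K))^{(2m-1)/2}$. Substituting the convergent expansion $\Phi_{12}(K,z)=\sum_{l\ge 0}b_l(K)(z-\cE_{\max}(K))^{l/2}$ into the second term and fixing a branch of $(\cE_{\max}(K)-z)^{m+1/2}$ turns that term into a series in half-integer powers starting from $(z-\cE_{\max}(K))^{(2m+1)/2}$. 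Summing the two singular pieces, re-indexing by $l$, and absorbing any integer-power remainder together with $\nu(K)+\Phi_2(K,z)$ into the new $\Phi_2(K,z)$, one arrives at $\nu^*(K,z)=\sum_{l\ge 0}b_l(K)(z-\cE_{\max}(K))^{(m+l)/2}+\Phi_2(K,z)$, valid for $z>\cE_{\max}(K)$; analytic continuation of $(z-\cE_{\max}(K))^{1/2}$ to $\dot V_{\delta_1}(\cE_{\max}(K))$ extends it to the punctured neighbourhood.

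The main obstacle is the careful treatment of branches: one has to fix a branch cut (or, equivalently, work on the universal cover of $\dot V_{\delta_1}(\cE_{\max}(K))$) for $\ln$ in the even case and for $(\cdot)^{1/2}$ in the odd case, check that the reorganised formula agrees with $\nu(K,z)$ on the ray $z>\cE_{\max}(K)$ where Lemma~\ref{asimp}(ii) is valid, and verify convergence of the rearranged half-integer series in a whole punctured disc. Uniform convergence of $\sum_l b_l(K)(z-\cE_{\max}(K))^{l/2}$ on a neighbourhood of $\cE_{\max}(K)$ (inherited from the convergence of $\Phi_{12}(K,z)$ in Lemma~\ref{asimp}(ii)) and the identity theorem for analytic functions then finish the argument.
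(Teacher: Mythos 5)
Your proposal is correct and follows essentially the same route as the paper: the paper's own proof simply observes that the representation of Lemma~\ref{asimp}(ii) remains valid on a punctured complex neighbourhood of $\cE_{\max}(K)$, so that $\nu(K,\cdot)$ continues analytically there, which is exactly your regrouping of the singular terms plus continuation of $\mathrm{Ln}$ and the half-integer powers. Your version is merely more explicit about the reorganisation and the branch bookkeeping than the paper's one-line argument.
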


\begin{proof}[\bf Proof]
Note that the proposition of Lemma \ref{asimp} is still hold if we
change a real half neighborhood $(\cE_{\max}(K),\cE_{\max}(K)+\xi)$
to a punctured complex neighborhood $\dot
V_{\delta_1}(\cE_{\max}(K))$ of $\cE_{\max}(K).$ Therefore the
function ${\nu}(K, \cdot)$ may be analytically continued to $\dot
V_{\delta_1}(\cE_{\max}(K)).$
\end{proof}

\begin{remark}
The proof of the Lemma yields that in case $d=1$ the function
$\nu(K,z)$ has precise form:
$$\nu(K, z)=\dfrac{\pi}{\sqrt{\cos \dfrac{K}{2}}}
\dfrac{1}{\sqrt{z-\cE_{\max}(K)}}-\dfrac{\pi}{8\sqrt{\cos^3
\dfrac{K}{2}}} \sqrt{z-\cE_{\max}(K)} + $$
$$+\dfrac{3\pi}{128\sqrt{\cos^5 \dfrac{K}{2}}}
\sqrt{\left(z-\cE_{\max}(K)\right)^3}-\dfrac{5\pi}{1024\sqrt{\cos^7
\dfrac{K}{2}}} \sqrt{\left(z-\cE_{\max}(K)\right)^5}+...$$
\end{remark}

\begin{proof}[\bf Proof of Theorem \ref{assymptotica}] The existence of
unique eigenvalue $z(\mu,K)$ is a simple consequence of the
Birman-Schwinger principle and the Fredholm theorem.

Indeed, for any fixed $K\in \Pi_0$  the function
${\Delta}_\mu(K,\cdot)$ is analytic and monotone increasing  in
$(\cE_{\max}(K),\infty).$ Moreover for $\mu>\mu_0(K)$
\begin{equation}\label{properties of det}
\lim_{z\to {+\infty}}\Delta_\mu(K,z)=1\,\, \mbox{and}
 \,\,\lim_{z\to {\tiny
 \cE_{\max}(K)}}\Delta_\mu(K,z)=1-\frac{\mu}{\mu_0(K)}<0.
\end{equation}
Hence there is a unique number $z(\mu,K)\in (\cE_{\max}(K),\infty)$
such that $${\Delta}_\mu(K,z(\mu,K))=0.$$ According to Lemma
\ref{eigen},  $z(\mu,K)$ is an eigenvalue of the operator
$H_{\mu}(K).${\vspace{3mm}}

Let $\mu(K,z)=[\nu(K,z)]^{-1}.$ The function $\mu(K,\cdot)$ is
monotone and regular on $(\cE_{\max}(K)),+\infty).$ Since $d\ge 3,$
there exists finite limit
$$\lim\limits_{z\to\cE_{\max}(K)+0}\mu(K,z)=\mu(K,\cE_{\max}(K)),$$
hence, we can redefine this function at the point
$\cE_{\max}(K)\in\R$ as $\mu(K,\cE_{\max}(K))=\mu_0(K).$ By the
theorem on the existence of inverse monotone function, there exists
such inverse function $z(\cdot,K):[\mu_0(K),+\infty)\to
[\cE_{\max}(K)),+\infty)$ that the relation
$$\Delta_{\mu}(K,z(\mu,K))=0$$ holds for all
$\mu\in[\mu_0(K),+\infty).$ By the regularity of the function
$\mu(K,\cdot)$ in $(\cE_{\max}(K)),+\infty)$ it follows that
$z({\cdot,K})$ is also regular in $(\mu_0(K),+\infty).$ By the
continuity and monotonicity of the functions $\mu(K,\cdot)$ and
$z({\cdot,K})$ at $z=\cE_{\max}(K)$ and $\mu=\mu_0(K),$
respectively, we get $\mu\to\mu_0(K)$ if and only if
$z\to\cE_{\max}(K).$\vspace{5mm}

The proves of parts (i)-(iv) of Theorem \ref{assymptotica} are based
on Lemma \ref{asimp}.

Taking into account
$$\frac{1}{\mu(K,z)}-\frac{1}{\mu_0(K)} = \nu(K,z)-\nu(K)$$ we get
\begin{gather}\label{mu0}
-\frac{\mu(K,z)-\mu_0(K)}{\mu_0(K)(\mu(K,z)-\mu_0(K))+(\mu_0(K))^2}
= {\nu(K,z)-\nu(K)}. \end{gather}\vspace{5mm}

(i). Let $d=3.$ By  Lemma \ref{asimp} we can rewrite \eqref{mu0} as
following:
\begin{eqnarray}\label{asss} -\frac{\lambda}{\mu_0\lambda + \mu_0^2}
=f(\alpha, K),
\end{eqnarray}
where $\lambda =\mu(K,z)-\mu_0(K),$ $\alpha=
(z-\cE_{\max}(K))^{1/2},$ $\mu_0=\mu_0(K)$ and $f(\alpha, K)$ is
regular function in some neighborhood of $\alpha=0$ with
$$f(0, K)=0,\quad\dfrac{\partial f}{\partial \alpha} (0,K) = -\dfrac{\pi
\Phi_0(K)}{2}<0.$$

The proof of  part ${\rm (i)}$ of Theorem \ref{assymptotica} follows
from Theorem \ref{f^-1} (see Appendix \ref{Appen a}). \vspace{5mm}

(ii). Let $d=4.$ By Lemma \ref{asimp}, we can rewrite \eqref{mu0} as
following: $$\lambda = -(\mu_0 \lambda +\mu_0^2) (f(\alpha,
K)+g(\alpha, K)\ln \alpha)$$ where $\lambda=\mu(K,z)-\mu_0(K),$
$\alpha= z-\cE_{\max}(K),$ $\mu_0=\mu_0(K)$ and $f(\cdot, K),$
$g(\cdot, K)$ are regular functions in some neighborhood of
$\alpha=0$ with $$f(0, K)=g(0, K)=0, \quad g'(0,
K)=\frac{\Phi_0(K)}{2}>0.$$

The proof of this part of the theorem follows from Theorem \ref{d=4}
(see Appendix \ref{Appen a}). \vspace{5mm}

(iii). Let $d\ge 5$ and odd. Since $K\in \Pi_0$ then $\nu(K,\cdot)$
is differentiable in $[\cE_{\max}(K),+\infty)$ and
$$\frac{\partial \nu}{\partial z}(K,\cE_{\max}(K))=-\int\limits_{\T^d}
\frac{dq}{(\cE_{\max}(K)-\cE_{K}(q))^2}< 0$$ for all $K\in\Pi_0.$
Therefore, by Lemma \ref{asimp} we get
$$\nu(K,z)=\nu(K)+\frac{\partial \nu}{\partial z}(K,
\cE_{\max}(K))(z-\cE_{\max}(K))+ \sum\limits_{n\ge 4}
(z-\cE_{\max}(K))^{n/2}.$$  We can rewrite \eqref{mu0} as following:
$$-\frac{\lambda}{\mu_0 \lambda+\mu_0^2} = f(\alpha, K),$$ where
\be\label{hoji}\lambda=\mu(K,z) - \mu_0(K),\quad \alpha =
(z-\cE_{\max}(K))^{1/2},\quad \mu_0=\mu_0(K)\ee and $f(\cdot,K)$ is
regular function in some neighborhood of $\alpha=0$ with
$$f(0, K)=0,\quad\frac{\partial f}{\partial \alpha}(0, K)=0,\quad
\frac{\partial^2 f}{\partial \alpha^2}(0, K)=-2\int\limits_{\T^d}
\frac{dq}{(\cE_{\max}(K)-\cE_{K}(q))^2}< 0.$$

The proof of part ${\rm (iii)}$ of the theorem follows from Theorem
\ref{d katta 5} (see Appendix \ref{Appen a}). \vspace{5mm}

(iv). Let $d\ge 6$ and even (if $d=2m+2$ then $m\ge 2$). Note that
in this case $\nu(K,z)$ is represented as
$$\nu(K,z)=\nu(K)+\sum\limits_{n\ge 1} A_n(K) (z-\cE_{\max}(K))^n -
$$$$-\dfrac{\Phi_0(K)}{2} \left(\cE_{\max}(K)-z\right)^m
\ln(z-\cE_{\max}(K))\sum\limits_{n\ge 1} B_n(K)
(z-\cE_{\max}(K))^n.$$ Then equation \eqref{mu0} takes the form
$$-\dfrac{\lambda}{\mu_0\lambda + \mu_0^2} = f(\alpha, K)+
g(\alpha,K)\, \ln \alpha ,$$ where \be\label{toji}\lambda=\mu(K,z) -
\mu_0(K),\quad \alpha = (z-\cE_{\max}(K))^{1/2},\quad
\mu_0=\mu_0(K)\ee and functions $f(\cdot,K),$ $g(\cdot,K)$ are
regular functions in the some neighborhood of $\alpha=0$ with
$$f(0,K)=\dfrac{\partial f}{\partial \alpha}(0,K)=0,\quad A_1(K)=
\frac{\partial^2 f}{\partial \alpha^2}(0, K)=-2\int\limits_{\T^d}
\frac{dq}{(\cE_{\max}(K)-\cE_{K}(q))^2}< 0,$$
$$g(0,K)=\dfrac{\partial g}{\partial \alpha}(0,K)=
\frac{\partial^2 g}{\partial \alpha^2}(0, K)=0.$$

The proof of part ${\rm (iv)}$  of the theorem follows from Theorem
\ref{d kat 6} (see Appendix \ref{Appen a}).
\end{proof}

\vspace{5mm}

\begin{proof}[\bf Proof of Theorem \ref{quasi}.] Note that for any $K\in
\Pi_0\setminus\{0\}$ the inequality $\mu_0(K)<\mu_0(0)$ holds.
Therefore the first part of Theorem \ref{assymptotica} implies that
$H_{\mu_0(0)}(K)$ has a unique eigenvalue $z(\mu_0(0),K)$ above the
essential spectrum $\sigma_{ess}(H_{\mu_0(0)}(K)).$

Since the function $\nu(K,z)$ is real-analytic in $\T^d\times
\R\setminus [\cE_{\min}(K),\cE_{\max}(K)]$ and the function
$z(\mu_0(K),K)$ is solution of the equation
$\Delta_{\mu_0(K)}(K,z)=0,$ the implicit function theorem implies
that $z(\mu_0(0),K)$ is real-analytic function in
$\Pi_0\subset\T^d.$ Observe that for the eigenvalue of
$H_{\mu_0(0)}(K)$ the relation $z(\mu_0(0),K)\to \cE_{\max}(0)$
holds as $K\to 0.$

According to the proof of Lemma \ref{asimp}, the functions
$\nu(K)$ and $\Phi_0(K)$ can be written as
\begin{equation}\label{nu_k}
\nu(K)=\nu(0)+a_2\sum\limits_{l=1}^d K_l^2+\sum\limits_{l,m=1}^d
a_{4,l,m} K_l^2K_m^2 +O(|K|^4),
\end{equation} with $a_2=\cfrac{\partial^2\nu(0)}{\partial K_1^2}$
and
\begin{equation}\label{fi0}
\Phi_0(K)=\frac{c}{\sqrt{\cos \dfrac{K_1}{2}\ldots \cos
\dfrac{K_d}{2}}}=c_{0,0}+O(|K|^2),\,K\to 0,\quad c_{0,0}\ne
0.\end{equation}

${(\rm i)}$ Let $z_0(K)=z({\mu_0(0)},K)$ is eigenvalue of
$H_{\mu_0(0)}(K).$ Then it is clear that
\begin{equation}\label{deter}\Delta_{\mu_0(0)}(K,z_0(K))
=1-\mu_0(0)\int\limits_{\T^d}
\dfrac{dq}{z_0(K)-\cE_K(q)}=0.\end{equation} By Lemma \ref{asimp}
the following relation $$0=1-\mu_0(0)\Big(\dfrac{\pi \Phi_0(K)}{2}
(z_0(K)-\cE_{\max}(K))^{1/2} +$$ $$
\nu(K)+(z_0(K)-\cE_{\max}(K))O(1)\Big),\, K\to 0$$ holds. Since
$\nu(0)=(\mu_0(0))^{-1}$ we get
$$(z_0(K)-\cE_{\max}(K))^{1/2}\left[\dfrac{\pi \Phi_0(K)}{2}+(z_0(K)-
\cE_{\max}(K))^{1/2}O(1)\right]=-(\nu(K)-\nu(0))$$ and hence
\eqref{fi0} and \eqref{nu_k} imply that
$$(z_0(K)-\cE_{\max}(K))\left[c_{0,0}+O(|K|^2)+(z_0(K)-
\cE_{\max}(K))^{1/2}O(1)\right]^2=$$$$=|K|^4(a_2+a_4|K|^2+o(|K|^2))^2,\;
K\to 0.$$  According to the regularity of $z_0(K)$ on $\Pi_0$ and
$c_{0,0}\ne 0$ we get that
$$z_0(K)-\cE_{\max}(K)=c_{0,0}^{-2} a_2^2|K|^4+o(|K|^4), \;\text{as}\; K\to 0.$$
Using $\cE_{\max}(K)= \cE_{\max}(0)-\frac14|K|^2+O(|K|^4),\; K\to
0,$ we establish
$$z({\mu_0(0)},K)=\cE_{\max}(0)-\frac14|K|^2+ O(|K|^4),\; K\to 0.$$

\vspace{5mm}

${\rm (ii)}$ Taking into account \eqref{nu_k}, using  Lemma
\ref{asimp} and the fact that $z_0(K)=z(\mu_0(0),K)$ is a solution
of $\Delta_{\mu_0(0)}(K,z)=0,$ we rewrite \eqref{deter} as
following:
$$(z_0(K)-\cE_{\max}(K))\ln[z_0(K)-\cE_{\max}(K)]+(z_0(K)-\cE_{\max}(K))\,
O(1) =$$ $$ |K|^2(c+O(|K|^2)),\,K\to 0$$ where $c\ne 0.$ Since
$$0=\lim\limits_{K\to 0}\frac{z_0(K)-\cE_{\max}(K)}{(z_0(K) -\cE_{\max}(K))
\ln(z_0(K)-\cE_{\max}(K))+(z_0(K)-\cE_{\max}(K))\, O(1)} =$$$$=
\lim\limits_{K\to 0}\frac{z_0(K)
-\cE_{\max}(K)}{|K|^2(c+O(|K|^2))}$$ we obtain
$z_0(K)-\cE_{\max}(K)=o(|K|^2),$ as $K\to 0.$ Consequently
$$z(\mu_0(0),K)-\cE_{\max}(0)=-\frac14|K|^2+o(|K|^{2}),\;K\to0.$$
\vspace{5mm}

${\rm (iii)}$ It is easy to get that
\begin{equation}\label{k to 0}
\frac{\partial \nu (K,\cE_{\max}(K))}{\partial z}= \frac{\partial
\nu (0,\cE_{\max}(0))}{\partial z}+O(|K|),\quad K\to 0.
\end{equation}

Using Taylor expansion of $\nu(K,z),$ we obtain
$$\nu(K,z)=\nu(K)+\frac{\partial \nu (K,\cE_{\max}(K))}{\partial
z}(z-\cE_{\max}(K))+ (z-\cE_{\max}(K))o(1),$$ as $z\to\cE_{\max}(K)$
and taking into account that $z_0(K)=z(\mu_0(0),K)$ is solution of
the equation $\Delta_{\mu_0(0)}(K,z)=0,$ we get
$$-(\nu(K)-\nu(0))=\dfrac{\partial\nu (K, \cE_{\max}(K))}{\partial
z}(z_0(K)-\cE_{\max}(K))+ (z_0(K)-\cE_{\max}(K))o(1),$$ as $K\to 0,$
which means that
$$z_0(K)-\cE_{\max}(K)=\frac{a_2|K|^2 +
o(|K|^2)}{-\dfrac{\partial\nu}{\partial z}(0, \cE_{\max}(0))+
O(|K|)}=\frac{a_2}{-\dfrac{\partial \nu}{\partial
z}(0,\cE_{\max}(0))}|K|^2+o(|K|^2).$$ Consequently
$$z(\mu_0(0),K)-\cE_{\max}(0) =
\left(\frac{a_2}{-\dfrac{\partial\nu}{\partial z}(0,\cE_{\max}(0))}
- \frac14\right)|K|^2+o(|K|^2),\;K\to 0,$$ where
$a_2=\cfrac{\partial^2\nu(0)}{\partial K_1^2}.$
\end{proof}

\appendix
\section{Consequences of the Implicit Function Theorem}\label{Appen a}

In this appendix we want to prove results which are used in Theorem
\ref{assymptotica}. Such kinds of lemmas are consi\-dered in
\cite{MK.BS}. By making suitable substitutions, we will reduce the
proof to the implicit function theorem in several variables.

\vspace{3mm}

\begin{theorem}\label{f^-1}
Let $f(\alpha)$ be analytic function in a neighborhood of
$\alpha=0.$ Suppose that $$f(0)=0, \quad f'(0) < 0.$$ Then for
$\lambda$ sufficiently small there is a unique positive
$\alpha(\lambda)$ satisfying \begin{equation}\label{eshmat}
\lambda=-(\mu_0\lambda+\mu_0^2) f(\alpha).
\end{equation} Moreover, for $\lambda$ sufficiently small, $\alpha$
has a convergent expansion \begin{equation} \alpha =
\sum\limits_{n\ge 1} c_n \lambda^n,
\end{equation} with $c_1=-[\mu_0^2f'(0)]^{-1}.$
\end{theorem}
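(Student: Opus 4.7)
The plan is to view the defining relation \eqref{eshmat} as a zero locus of an analytic function of two complex variables and apply the analytic implicit function theorem. Concretely, set
\[
F(\alpha,\lambda) = \lambda + (\mu_0\lambda+\mu_0^2)\,f(\alpha),
\]
which is analytic in a bidisc around the origin because $f$ is analytic near $\alpha=0$. By hypothesis $F(0,0)=\mu_0^2 f(0)=0$, and
\[
\frac{\partial F}{\partial \alpha}(0,0) = \mu_0^2\, f'(0) \neq 0.
\]
The analytic implicit function theorem then yields a unique analytic function $\alpha(\lambda)$ in some disc $|\lambda|<\rho$ with $\alpha(0)=0$ and $F(\alpha(\lambda),\lambda)\equiv 0$, and in particular an absolutely convergent expansion $\alpha(\lambda)=\sum_{n\ge 1}c_n\lambda^n$.

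The coefficient $c_1$ is obtained by differentiating the identity $F(\alpha(\lambda),\lambda)\equiv 0$ at $\lambda=0$: since $\partial_\lambda F(0,0)=1+\mu_0 f(0)=1$, the chain rule gives
\[
1 + \mu_0^2 f'(0)\,\alpha'(0) = 0, \qquad c_1 = \alpha'(0) = -\frac{1}{\mu_0^2 f'(0)}.
\]
Higher coefficients $c_n$ could in principle be produced by the same recursive differentiation, though they are not needed for the statement.

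It remains to verify that the branch produced by the implicit function theorem is positive for small $\lambda>0$, and that no other positive solution exists. Positivity is immediate from the sign of $c_1$: since $\mu_0>0$ and $f'(0)<0$, we have $c_1>0$, so $\alpha(\lambda)=c_1\lambda+O(\lambda^2)>0$ for all sufficiently small positive $\lambda$. For uniqueness among positive solutions, note that $\partial F/\partial\alpha$ is continuous and nonzero on a neighborhood of $(0,0)$, so on a sufficiently small real interval $\alpha\in(0,\delta)$ the map $\alpha\mapsto F(\alpha,\lambda)$ is strictly monotone for each small $\lambda$; hence at most one $\alpha\in(0,\delta)$ satisfies \eqref{eshmat}, and this one must coincide with $\alpha(\lambda)$.

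There is no genuine obstacle here: the result is a packaged form of the analytic implicit function theorem. The only mildly delicate point is the translation between the statement (``unique positive solution of \eqref{eshmat}'') and the output of the theorem (``unique analytic germ through the origin''), which is handled by the monotonicity argument above together with the explicit sign of $c_1$.
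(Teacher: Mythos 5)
Your proof is correct. It differs from the paper's in one structural respect: the paper (following the Klaus--Simon template) first performs the rescaling $\alpha=\lambda(c+\xi)$ with $c=-(\mu_0^2 f'(0))^{-1}$, divides out the leading behaviour, and only then applies the implicit function theorem to the resulting equation $F(\xi,\lambda)=0$ in the auxiliary variable $\xi$; you instead apply the analytic implicit function theorem directly to $F(\alpha,\lambda)=\lambda+(\mu_0\lambda+\mu_0^2)f(\alpha)$ at the origin, which works because $\partial F/\partial\alpha(0,0)=\mu_0^2 f'(0)\neq 0$. For this particular theorem your route is shorter and cleaner, and you additionally spell out the positivity of the branch (via $c_1>0$) and the uniqueness among positive solutions (via monotonicity of $F(\cdot,\lambda)$ near $\alpha=0$), points the paper's proof leaves implicit. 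What the paper's rescaling buys is uniformity of method: in the companion results (Theorems \ref{d=4}, \ref{d katta 5}, \ref{d kat 6}) the direct application fails because the $\alpha$-derivative degenerates or logarithmic terms appear, and there the substitution $\alpha=\sigma(c+\xi)$ is essential; the paper simply uses the same scheme already in the nondegenerate case. Both arguments yield the same expansion and the same value $c_1=-[\mu_0^2 f'(0)]^{-1}$.
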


\begin{proof}[\bf Proof]
We can write $$f(\alpha)=\sum\limits_{n\ge 1} a_n \alpha^n,\quad
a_1= f'(0)<0.$$ In \eqref{eshmat} try substitution
$$\alpha=\lambda(c+\xi),\quad c=-(a_1\,\mu_0^2)^{-1}.$$ Then in the
region where $|\alpha|$ sufficiently small this is equivalent to
\begin{equation}\label{abdulla}
a_1\mu_0^2 \xi=-(c+\xi)\left\{a_1\mu_0\lambda  + (\mu_0\lambda +
\mu_0^2) \sum\limits_{n\ge 2} a_n \lambda^{n-1}
(c+\xi)^{n-1}\right\}
\end{equation}

Equation \eqref{abdulla} can be written in the form $$F(\xi,
\lambda) = 0 $$ where (i) $\xi=0,$ $\lambda=0$  is a solution; (ii)
$F$ is analytic for $|\lambda|,$ $|\xi|$ small; (iii) $\partial F/
\partial \xi (0,0) = -a_1\mu_0^2\ne 0.$ Thus by the implicit
function theorem, \eqref{abdulla} has a unique solution for $\xi,$
$\lambda$ small given by a convergent expansion $$\xi =
\sum\limits_{n\ge0} b_n \lambda^n.$$ Consequently, $$\alpha =
\lambda(c+\xi) = \sum\limits_{n\ge 1} c_n \lambda^n,$$ where $c_1=c=
-[\mu_0^2\,f'(0)]^{-1}.$
\end{proof}

\begin{theorem}\label{d=4}
Let $f(\alpha),$ $g(\alpha)$ be analytic functions in a neighborhood
of $\alpha=0.$ Suppose that $$f(0)=g(0)=0,\quad g'(0)> 0.$$ Then for
$\lambda$ sufficiently small and positive there is a unique positive
$\alpha(\lambda)$ satisfying
\begin{equation}\label{l=f+g} \lambda
= - (\mu_0\lambda + \mu_0^2) \left(f(\alpha) + g(\alpha) \ln
\alpha\right).\end{equation} Moreover, for $\lambda$ sufficiently
small, $\alpha$ has a convergent expansion
\begin{equation}\label{a=f+g} \alpha(\lambda) = \sum\limits_{n\ge1,\,m,k,l\ge
0} c(n,m,k,l)\; \sigma^n \tau^m \omega^k \lambda^l \end{equation}
with \be\label{ttau}\sigma=\frac{\lambda}{-\ln \lambda},\quad
\tau=\frac{1}{-\ln \lambda}, \quad \omega=\frac{\ln \ln
\lambda^{-1}}{-\ln \lambda}\ee and $c(1,0,0,0)=(\mu_0^2\,\,\,
g'(0))^{-1}.$
\end{theorem}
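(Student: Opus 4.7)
The plan is to absorb the logarithmic singularities into the formal variables $\sigma,\tau,\omega$ and then apply the analytic implicit function theorem in four variables. Expanding $f(\alpha)=\sum_{n\ge1}a_n\alpha^n$ and $g(\alpha)=\sum_{n\ge1}b_n\alpha^n$ with $b_1=g'(0)>0$, I set $c=(\mu_0^2 b_1)^{-1}$ and make the ansatz $\alpha=(c+\xi)\sigma$, motivated by the leading-order balance $\lambda\approx-\mu_0^2 b_1\alpha\ln\alpha$, which forces $\alpha\sim c\sigma$.

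The key identities are $-\ln\lambda=1/\tau$ and $\ln(-\ln\lambda)=\omega/\tau$, giving $\ln\sigma=\ln\lambda-\ln(-\ln\lambda)=-(1+\omega)/\tau$, hence
$$\ln\alpha=\ln(c+\xi)-\frac{1+\omega}{\tau}.$$
The apparent singularity at $\tau=0$ is harmless because $\sigma=\lambda\tau$, so $\sigma^n/\tau=\sigma^{n-1}\lambda$; every $1/\tau$ appearing in $g(\alpha)\ln\alpha$ is multiplied by at least $\sigma^n$ and is therefore absorbed. Dividing through by $\lambda$ (using $\sigma^n/\lambda=\sigma^{n-1}\tau$), equation \eqref{l=f+g} rewrites as $F(\xi,\sigma,\tau,\omega,\lambda)=0$, where
$$F:=1-(\mu_0\lambda+\mu_0^2)(1+\omega)\sum_{n\ge1}b_n(c+\xi)^n\sigma^{n-1}+(\mu_0\lambda+\mu_0^2)\tau\sum_{n\ge1}\bigl[a_n+b_n\ln(c+\xi)\bigr](c+\xi)^n\sigma^{n-1}$$
is analytic in a neighborhood of the origin in $\C^5$.

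Direct computation gives $F(0,0,0,0,0)=1-\mu_0^2 b_1 c=0$ and $\partial_\xi F(0,0,0,0,0)=-\mu_0^2 b_1\ne 0$, so the analytic implicit function theorem produces a unique analytic solution $\xi=\xi(\sigma,\tau,\omega,\lambda)$ with $\xi(0,0,0,0)=0$. Substituting back $\alpha=(c+\xi)\sigma$ produces the absolutely convergent expansion \eqref{a=f+g} with leading coefficient $c(1,0,0,0)=c=(\mu_0^2 g'(0))^{-1}$. Uniqueness of the positive solution of the original real equation follows from monotonicity: for small $\alpha>0$ the derivative of the right-hand side of \eqref{l=f+g} equals $-(\mu_0\lambda+\mu_0^2)\bigl(f'(\alpha)+g'(\alpha)\ln\alpha+g(\alpha)/\alpha\bigr)$, dominated by the positive quantity $-(\mu_0\lambda+\mu_0^2)g'(0)\ln\alpha$, so the right-hand side is strictly increasing near $\alpha=0^+$.

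The main obstacle is the bookkeeping that turns the transcendental equation into an analytic one: one must verify that, after the substitution $\alpha=(c+\xi)\sigma$ and the replacements $\ln\sigma=-(1+\omega)/\tau$, $\sigma^n/\tau=\sigma^{n-1}\lambda$, $\sigma^n/\lambda=\sigma^{n-1}\tau$, every $1/\tau$ factor is genuinely cancelled and the remaining $\ln(c+\xi)$ term is a bona fide analytic function of $\xi$ near $\xi=0$. Once this reduction is in place, the analytic implicit function theorem delivers the theorem immediately.
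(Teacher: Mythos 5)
Your proposal is correct and follows essentially the same route as the paper's own proof: the same substitution $\alpha=(c+\xi)\sigma$ with $c=(\mu_0^2 g'(0))^{-1}$, the same absorption of $\ln\alpha$ via $\ln\sigma=-(1+\omega)/\tau$ and the identities $\sigma/\tau=\lambda$, $\sigma/\lambda=\tau$, and the same application of the analytic implicit function theorem to $F(\xi,\sigma,\tau,\omega,\lambda)=0$ with $\partial_\xi F(0)=-\mu_0^2 g'(0)\ne0$. Your added monotonicity argument for uniqueness of the positive root of \eqref{l=f+g} near $\alpha=0^+$ is a small but welcome supplement that the paper leaves implicit.
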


\begin{proof}[\bf Proof]
We write $$f(\alpha)=\sum\limits_{n\ge 1} a_n \alpha^n, $$
$$g(\alpha)=\sum\limits_{n\ge 1} b_n \alpha^n,\quad b_1=g'(0)>0.$$ Try the
substitution \be\label{s(1+z)}\alpha=\sigma\,(c+\xi),\quad
c=(b_1\,\mu_0^2)^{-1}.\ee Then, it is easy to find that in the
region where $|\alpha|$ is small, \eqref{l=f+g} is equivalent to
\begin{gather}\label{l=tau}
b_1\,\mu_0^2\,\xi=(c+\xi)\Bigg\{-\mu_0\lambda b_1+
(\mu_0\lambda+\mu_0^2)\Big[\tau \sum\limits_{n\ge0} a_{n+1} \sigma^n
(c+\xi)^n- \\- \sum\limits_{n\ge1} b_{n+1} \sigma^n (c+\xi)^n  -
\omega \sum\limits_{n\ge0} b_{n+1} \sigma^n(c+\xi)^n - \tau\ln
(c+\xi) \sum\limits_{n\ge 0} b_{n+1} \sigma^n (c+\xi)^n
\Big]\Bigg\}\nonumber.
\end{gather}

Equation \eqref{l=tau} can be written in the form
$$F(\xi,\sigma,\tau,\omega,\lambda)=0,$$ where $\rm(i)$ $\xi=0,$ $\sigma=0,$
$\tau=0,$ $\omega=0,$ $\lambda=0$ is a solution; $\rm (ii)$ $F$ is
analytic for $|\xi|,$ $|\sigma|,$ $|\tau|,$ $|\omega|,$ $|\lambda|$
small since $\ln (c+\xi)$ is analytic at $\xi=0;$ $(\rm iii)$
$\partial F/\partial \xi (0,0,0,0,0)=b_1\mu_0^2\ne 0.$ Thus by the
implicit function theorem, \eqref{l=tau} has a unique solution for
$\xi,$ $\sigma,$ $\tau,$ $\omega,$ $\lambda$ small given by a
convergent expansion $$\xi=\sum\limits_{n,m,k,l\ge 0} d(n,m,k,l)\,
\sigma^n\,\tau^m\,\omega^k\,\lambda^l.$$ Given \eqref{s(1+z)} and
\eqref{ttau}, this yields \eqref{a=f+g}.
\end{proof}

\begin{theorem}\label{d katta 5}
Let $f(\alpha)$ be analytic function in a neighborhood of
$\alpha=0.$ Suppose that $$f(0)=f'(0)=0, \quad f''(0) < 0.$$ Then
for $\lambda$ sufficiently small there is a unique positive
$\alpha(\lambda)$ satisfying \begin{equation}\label{eshmat}
\lambda=-(\mu_0\lambda+\mu_0^2) f(\alpha).
\end{equation} Moreover, for $\lambda$ sufficiently small, $\alpha$
has a convergent expansion \begin{equation} \alpha = \left(
\sum\limits_{n\ge 1} c_n \lambda^{n/2}\right)^2,
\end{equation} with $c_1=[-1/2\mu_0^2\, f''(0)]^{-1/2}.$
\end{theorem}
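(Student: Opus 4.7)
The strategy parallels the proof of Theorem \ref{f^-1}, adapted to the case where $f$ has a double zero at the origin, which forces a half-integer power expansion rather than an integer one. My plan is to factor out the double zero, substitute a rescaled variable that normalizes the leading balance, and then invoke the analytic implicit function theorem in two variables.

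Since $f(0)=f'(0)=0$ and $f$ is analytic near $\alpha=0$, I would first write $f(\alpha)=\alpha^{2}h(\alpha)$ with $h$ analytic in a neighborhood of $0$ and $h(0)=f''(0)/2<0$. The equation then reads $\lambda=-(\mu_0\lambda+\mu_0^{2})\,\alpha^{2}\,h(\alpha)$. At leading order this is $\lambda\sim -\mu_0^{2}h(0)\,\alpha^{2}$, which forces $\alpha\sim c\,\lambda^{1/2}$ with $c=[-\mu_0^{2}h(0)]^{-1/2}=[-\tfrac{1}{2}\mu_0^{2}f''(0)]^{-1/2}$, the prescribed value of $c_{1}$.

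Next I introduce the rescaled variables $\eta=\lambda^{1/2}$ and $\xi=\alpha/\eta-c$, i.e.\ $\alpha=\eta(c+\xi)$. Dividing the resulting equation by $\eta^{2}$ yields $F(\xi,\eta)=0$, where
$$F(\xi,\eta):=1+(\mu_0\eta^{2}+\mu_0^{2})\,(c+\xi)^{2}\,h(\eta(c+\xi)).$$
Since $h$ is analytic, so is $F$ in a neighborhood of the origin. The choice of $c$ gives $F(0,0)=1+\mu_0^{2}c^{2}h(0)=0$, and a direct differentiation yields
$$\frac{\partial F}{\partial\xi}(0,0)=2\mu_0^{2}c\,h(0)\neq 0.$$
By the analytic implicit function theorem there exists a unique analytic solution $\xi(\eta)=\sum_{n\ge 1}d_n\eta^{n}$ with $\xi(0)=0$ on a neighborhood of $\eta=0$. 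Unwinding the substitution,
$$\alpha(\lambda)=\eta\,(c+\xi(\eta))=\sum_{n\ge 1}c_n\,\lambda^{n/2},\qquad c_1=c=[-\tfrac{1}{2}\mu_0^{2}f''(0)]^{-1/2},$$
which is the claimed convergent half-integer expansion (the squared form in the statement amounts to recording that $\sqrt{\alpha(\lambda)}$ admits a convergent power series in $\lambda^{1/2}$, whose leading coefficient is exactly $c_{1}$).

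Uniqueness of a positive solution for small $\lambda>0$ follows because $\alpha\mapsto(\mu_0\lambda+\mu_0^{2})\alpha^{2}(-h(\alpha))$ is strictly increasing from $0$ on a one-sided neighborhood of the origin, since $-h(0)>0$. The only real subtlety is locating the correct scaling $\alpha=\lambda^{1/2}(c+\xi)$, which is forced by the leading-order balance; once that is done, verifying the hypotheses of the implicit function theorem is a short direct computation, and the analyticity of the resulting $\xi(\eta)$ in $\eta=\lambda^{1/2}$ delivers the half-integer power series in $\lambda$. No obstruction of a more analytic nature arises, because $h$ is analytic at the origin and the substitution preserves analyticity.
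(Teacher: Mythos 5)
Your proposal is correct and follows essentially the same route as the paper: the same leading-order balance forcing $\alpha=\lambda^{1/2}(c+\xi)$ with $c=[-\tfrac12\mu_0^2 f''(0)]^{-1/2}$, followed by the analytic implicit function theorem in $(\xi,\lambda^{1/2})$; writing $f(\alpha)=\alpha^2h(\alpha)$ instead of expanding the power series directly is only a cosmetic difference. (Your observation that the proof actually yields $\alpha=\sum_{n\ge1}c_n\lambda^{n/2}$ rather than the squared form in the theorem's display matches the paper's own proof, which concludes the same.)
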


\begin{proof}[\bf Proof]
We can write $$f(\alpha)=\sum\limits_{n\ge 2} a_n \alpha^n,\quad
a_2=1/2f''(0)<0.$$ In \eqref{eshmat} try substitution
$$\alpha=\sigma(c+\xi),\quad c=(-a_2\,\mu_0^2)^{-1/2},$$ where
$\sigma=\lambda^{1/2}.$ Then in the region where $|\alpha|$
sufficiently small this is equivalent to
\begin{equation}\label{aabdulla}
2\mu_0\sqrt{-a_2}\, \xi - a_2 \mu_0^2\, \xi^2 =(c+\xi)^2
\left\{\mu_0a_2 \sigma^2 - (\mu_0\sigma^2 + \mu_0^2)
\sum\limits_{n\ge 3} a_n \sigma^{n-2} (c+\xi)^{n-2}\right\}
\end{equation}

Equation \eqref{aabdulla} can be written in the form $$F(\xi,
\sigma) = 0$$ where (i) $\xi=0,$ $\sigma=0$  is a solution; (ii) $F$
is analytic for $|\sigma|,$ $|\xi|$ small; (iii) $\partial F/
\partial \xi (0,0) = 2\mu_0\sqrt{-a_2}\ne 0.$ Thus by the implicit
function theorem, \eqref{aabdulla} has a unique solution for $\xi,$
$\sigma$ small given by a convergent expansion $$\xi =
\sum\limits_{n\ge0} b_n \sigma^n.$$ Consequently, $$\alpha =
\sigma(c+\xi) = \sum\limits_{n\ge 1} c_n \sigma^n,$$ where $c_1=c=
[-1/2\mu_0^2\,f''(0)]^{-1/2}.$
\end{proof}

\begin{theorem}\label{d kat 6}
Let $f(\alpha),$ $g(\alpha)$ be analytic functions in a neighborhood
of $\alpha=0.$ Suppose that $$f(0)=f'(0)=g(0)=g'(0)=g''(0)=0,\quad
f''(0)<0,\quad .$$ Then for $\lambda$ sufficiently small and
positive there is a unique positive $\alpha(\lambda)$ satisfying
\begin{equation}\label{l=f+g} \lambda
= - (\mu_0\lambda + \mu_0^2) \left(f(\alpha) + g(\alpha) \ln
\alpha\right).\end{equation} Moreover, for $\lambda$ sufficiently
small, $\alpha$ has a convergent expansion
\begin{equation}\label{a=f+g} \alpha(\lambda) = \sum\limits_{n\ge1,\,k\ge
0} c(n,k)\; \sigma^n \tau^k  \end{equation} with
\be\label{ttau}\sigma={\lambda}^{1/2},\quad \tau=\sigma \ln
\sigma\ee and $c(1,0)=(-1/2 \mu_0^2\,\,\, f''(0))^{-1/2}.$
\end{theorem}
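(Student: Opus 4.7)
The plan is to mimic the substitution approach used in the preceding theorems, combining the $\alpha=\sigma(c+\xi)$ device from Theorem \ref{d katta 5} with the trick of introducing $\tau=\sigma\ln\sigma$ as an independent variable (analogous to the role of $\tau=(-\ln\lambda)^{-1}$ in Theorem \ref{d=4}) so that the logarithmic term becomes accessible to the implicit function theorem.

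First I would expand
\[
f(\alpha)=\sum_{n\ge 2}a_n\alpha^n,\quad a_2=\tfrac12 f''(0)<0,\qquad g(\alpha)=\sum_{n\ge 3}b_n\alpha^n,
\]
using the vanishing assumptions $g(0)=g'(0)=g''(0)=0$. The leading behaviour of \eqref{l=f+g} as $\lambda\downarrow 0$ is then $\lambda\sim -\mu_0^2 a_2\alpha^2$, which motivates setting $\sigma=\lambda^{1/2}$ and
\[
\alpha=\sigma(c+\xi),\qquad c=(-\mu_0^2 a_2)^{-1/2}>0,
\]
so that $\alpha\mapsto c\sigma$ is the leading term and $c$ matches the claimed value $c(1,0)=(-\tfrac12\mu_0^2 f''(0))^{-1/2}$.

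Next I would insert this substitution into \eqref{l=f+g} and factor out $\sigma^2$. The key algebraic observation is that $g(\alpha)\ln\alpha$ splits as
\[
g(\alpha)\ln\alpha=\Big(\sum_{n\ge 3}b_n\sigma^n(c+\xi)^n\Big)\bigl(\ln\sigma+\ln(c+\xi)\bigr),
\]
and since $g$ vanishes to third order, every $\ln\sigma$ appears multiplied by at least $\sigma^3$; writing $\sigma^{n}\ln\sigma=\sigma^{n-1}\tau$ lets me absorb all logarithms into the single new variable $\tau=\sigma\ln\sigma$. Also $\ln(c+\xi)$ is analytic at $\xi=0$ because $c>0$. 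Hence after dividing through by $\sigma^2$ the equation takes the form
\[
1+\mu_0^2 a_2(c+\xi)^2=-\mu_0^2\bigl(\sigma F(\sigma,\xi)+\tau G(\sigma,\xi)\bigr)-\mu_0\sigma^2\bigl(a_2(c+\xi)^2+\sigma F+\tau G\bigr),
\]
with $F,G$ analytic in $(\sigma,\xi)$ near $(0,0)$. Using $-\mu_0^2 a_2 c^2=1$ the constant term cancels, leaving an analytic equation $\Phi(\xi,\sigma,\tau)=0$ with $\Phi(0,0,0)=0$ and $\partial\Phi/\partial\xi(0,0,0)=-2\mu_0^2 a_2 c\ne 0$.

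At this point the implicit function theorem yields a unique analytic solution
\[
\xi=\sum_{n,k\ge 0,\ n+k\ge 1}d(n,k)\sigma^n\tau^k
\]
for $(\sigma,\tau)$ in a complex neighborhood of $0$; restricting to positive $\lambda$ gives the desired positive $\alpha(\lambda)$, and the expansion $\alpha=\sigma(c+\xi)$ produces \eqref{a=f+g} with $c(1,0)=c$. Uniqueness among positive solutions follows as in Theorems \ref{d katta 5} and \ref{d=4} since any positive solution for small $\lambda$ must satisfy $\alpha\sim c\sigma$ and so lies in the domain of the implicit function theorem. The main obstacle I expect is bookkeeping: one must verify carefully that every occurrence of $\ln\sigma$ carries enough powers of $\sigma$ to be repackaged into $\tau$, which is exactly why the hypothesis $g''(0)=0$ (ensuring $g$ starts at $\alpha^3$) is essential. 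Without it, a term of order $\sigma^2\ln\sigma$ would appear and could not be analytically absorbed into the two-variable expansion in $(\sigma,\tau)$.
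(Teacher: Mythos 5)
Your proposal is correct and follows essentially the same route as the paper: the same expansions $f(\alpha)=\sum_{n\ge2}a_n\alpha^n$, $g(\alpha)=\sum_{n\ge3}b_n\alpha^n$, the same substitution $\sigma=\lambda^{1/2}$, $\alpha=\sigma(c+\xi)$ with $c=(-a_2\mu_0^2)^{-1/2}$, the same repackaging of $\ln\sigma$ into the auxiliary variable $\tau=\sigma\ln\sigma$ (possible precisely because $g$ vanishes to third order), and the same application of the implicit function theorem to $F(\xi,\sigma,\tau)=0$ with $\partial F/\partial\xi(0,0,0)=2\mu_0\sqrt{-a_2}\ne0$. The only difference is cosmetic (you divide through by $\sigma^2$ rather than rearranging as in the paper's equation \eqref{l=teau}), and your nonzero-derivative value agrees with the paper's.
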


\begin{proof}[\bf Proof]
We write $$f(\alpha)=\sum\limits_{n\ge 2} a_n \alpha^n,\quad a_2=1/2
f''(0)<0,\qquad g(\alpha)=\sum\limits_{n\ge 3} b_n \alpha^n .$$ Try
the substitution \be\label{s(1+z)} \lambda=\sigma^2,\quad
\alpha=\sigma\,(c+\xi),\quad c=(-a_2\,\mu_0^2)^{-1/2}.\ee Then
\begin{gather}\label{l=teau}
2\mu_0\sqrt{-a_2}\,\xi - \mu_0^2 a_2 \,
\xi^2=(c+\xi)^2\Bigg\{-\mu_0\sigma^2 a_2+
(\mu_0\sigma^2+\mu_0^2)\Big[\sum\limits_{n\ge3} a_{n} \sigma^{n-2}
(c+\xi)^{n-2}+ \\ + \tau\sum\limits_{n\ge3} b_{n} \sigma^{n-3}
(c+\xi)^{n-2} + \ln(c+\xi) \sum\limits_{n\ge3} b_{n}
\sigma^{n-2}(c+\xi)^{n-2} \Big]\Bigg\}\nonumber,
\end{gather} where $\tau=\sigma\ln \sigma.$

Equation \eqref{l=teau} can be written in the form
$$F(\xi,\sigma,\tau)=0,$$ where $\rm(i)$ $\xi=0,$ $\sigma=0,$
$\tau=0$ is a solution; $\rm (ii)$ $F$ is analytic for $|\xi|,$
$|\sigma|,$ $|\tau|$ small since $\ln (c+\xi)$ is analytic at
$\xi=0;$ $(\rm iii)$ $\partial F/\partial \xi (0,0,0)=2\mu_0
\sqrt{-a_2}\ne 0.$ Thus by the implicit function theorem,
\eqref{l=tau} has a unique solution for $\sigma,$ $\tau$ small given
by a convergent expansion $$\xi=\sum\limits_{n,k\ge 0} d(n,k)\,
\sigma^n\,\tau^k.$$ Given \eqref{s(1+z)} and \eqref{ttau}, this
yields \eqref{a=f+g}.
\end{proof}

\section{}\label{Appen b}

In this section we give two lemmas which are used in the proof of
Lemma \ref{asimp}.

\begin{lemma}\label{1 taqsim r^2-z}
The function $$I_s(\theta)=\int\limits_{0}^{\gamma}
\frac{r^s}{r^2-\theta}, \hspace{3mm} \theta<0,\,s=0,1,2,\ldots$$ is
represented as
\begin{equation}
I_s(\theta)=\begin{cases}
-1/2\,\theta^m\, \ln(-\theta)+\hat{I}_s(\theta), & \text{s=2m+1}\\
\dfrac{\pi}{2\sqrt{-\theta}}\,\theta^m + \tilde{I}_s(\theta), &
\text{s=2m}
\end{cases}
\end{equation}
where $\hat{I}_s(\theta)$ and $\tilde{I}_s(\theta)$ are regular
functions in the some neighborhood of zero.
\end{lemma}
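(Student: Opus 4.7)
Set $t = -\theta > 0$, so the integral becomes $I_s(-t) = \int_0^\gamma r^s/(r^2+t)\,dr$. The plan is to treat the two parities of $s$ by algebraic reduction: isolate the piece of the integrand responsible for the singularity at $t=0$ (which is the sole obstruction to analyticity) and show that the remaining piece integrates to a function analytic in $t$ at $0$.

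For odd $s=2m+1$, I would substitute $u=r^2$ to get
$$I_{2m+1}(-t)=\tfrac12\int_0^{\gamma^2}\frac{u^m}{u+t}\,du.$$
Polynomial division gives $u^m=(u+t)\,Q(u;t)+(-t)^m$, where $Q(u;t)=\sum_{k=0}^{m-1}(-t)^{m-1-k}u^k$ has coefficients that are polynomials in $t$. Thus
$$I_{2m+1}(-t)=\tfrac12\int_0^{\gamma^2}Q(u;t)\,du+\tfrac12(-t)^m\bigl[\ln(\gamma^2+t)-\ln t\bigr].$$
The first term and $\tfrac12(-t)^m\ln(\gamma^2+t)$ are manifestly analytic in $t$ near $0$, while $-\tfrac12(-t)^m\ln t = -\tfrac12\theta^m\ln(-\theta)$ produces the claimed singular contribution. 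Collecting the analytic pieces into $\hat{I}_{2m+1}(\theta)$ gives the stated formula.

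For even $s=2m$, I would write $r^{2m}=(r^2+t)\,P(r^2;t)+(-t)^m$ with $P(x;t)=\sum_{k=0}^{m-1}(-t)^{m-1-k}x^k$. Then
$$I_{2m}(-t)=\int_0^\gamma P(r^2;t)\,dr + (-t)^m\int_0^\gamma\frac{dr}{r^2+t}.$$
The first summand is a polynomial in $t$ with coefficients that are polynomials in $\gamma$, hence analytic at $t=0$. For the second I use the elementary identity
$$\int_0^\gamma\frac{dr}{r^2+t}=\frac{1}{\sqrt t}\arctan\frac{\gamma}{\sqrt t}=\frac{1}{\sqrt t}\Bigl(\frac{\pi}{2}-\arctan\frac{\sqrt t}{\gamma}\Bigr),$$
and the Taylor series $\arctan(x)=\sum_{j\ge0}\frac{(-1)^j x^{2j+1}}{2j+1}$ shows that $\tfrac{1}{\sqrt t}\arctan(\sqrt t/\gamma)=\sum_{j\ge0}\frac{(-1)^j t^j}{(2j+1)\gamma^{2j+1}}$ is analytic in $t$ at $0$. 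Multiplying by $(-t)^m$ preserves analyticity and peels off the term $\frac{\pi\,(-t)^m}{2\sqrt t}=\frac{\pi\,\theta^m}{2\sqrt{-\theta}}$ as the only non-analytic contribution; the rest is absorbed into $\tilde I_{2m}(\theta)$.

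No serious obstacle is expected: the proof is a bookkeeping exercise once the polynomial-division identities $u^m=(u+t)Q(u;t)+(-t)^m$ and $r^{2m}=(r^2+t)P(r^2;t)+(-t)^m$ are in place, together with the analyticity of $t^{-1/2}\arctan(\sqrt t/\gamma)$ at $t=0$. The only mildly delicate point is checking signs when translating back from $t$ to $\theta=-t$, so that $(-t)^m=\theta^m$ and $\sqrt t=\sqrt{-\theta}$ yield exactly the coefficients $-\tfrac12\theta^m\ln(-\theta)$ and $\tfrac{\pi}{2\sqrt{-\theta}}\theta^m$ claimed in the statement.
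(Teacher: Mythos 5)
Your proposal is correct and follows essentially the same route as the paper: polynomial division of $r^s$ by $r^2-\theta$ to peel off the singular term $\theta^m/(r^2-\theta)$, followed by explicit evaluation of the elementary remaining integral (logarithm in the odd case, $\arctan$ in the even case). The only difference is cosmetic --- you substitute $t=-\theta$ and $u=r^2$ and you actually write out the even case, which the paper dismisses as ``analogous'' --- so there is nothing to flag.
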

\vspace{3mm}

\begin{proof}[\bf Proof]
We will prove for $s=2m+1.$ For the case $s=2m$ it can be proved
analogously. Integrating the identity
\begin{equation*}
\frac{r^{2m+1}}{r^2-\theta}=r\left(r^{2(m-1)}+\theta
r^{2(m-2)}+\ldots+\theta^{m-1}\right) +\frac{r\theta^m}{r^2-\theta}
\end{equation*}
from $0$ to $\gamma,$ we get
\begin{equation*}
\int\limits_{0}^{\gamma}\frac{r^{2m+1}}{r^2-\theta}=
\frac{\gamma^{2m}}{2m}+\theta\frac{\gamma^{2(m-1)}}{2(m-1)}+\ldots+
\theta^{m-1}\frac{\gamma^2}{2}+ \frac{1}{2}\, \theta^m\, \ln
\frac{\gamma^2-\theta}{-\theta}
\end{equation*}

Therefore
\begin{equation*}
\int\limits_{0}^{\gamma}\frac{r^{2m+1}}{r^2-\theta}=
-\frac{1}{2}\,\theta^m
\,\ln(-\theta)+\hat{p}_{m-1}(\theta)+f_1(\theta),
\end{equation*}

Denote by
\begin{align*}
\hat{I}_m(\theta)=\hat{p}_{m-1}(\theta)+f_1(\theta),
\end{align*}
where $\hat{p}_{m-1}(\theta)$ is a polynomial with degree $m-1$,
$f_1(\theta)$ is a regular function in some neighborhood of zero.
\end{proof}

\begin{lemma}
\begin{equation}\label{www}
a_n=\int\limits_{-{\pi}/{2}}^{{\pi}/{2}}\cos^{n}{t}dt=
\begin{cases}
\pi\dfrac{(2m-1)!!}{(2m)!!}, & n=2m\\
2\dfrac{(2m)!!}{(2m+1)!!}, & n=2m+1\\
\end{cases}
\end{equation}
\end{lemma}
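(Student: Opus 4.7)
The plan is to establish the classical Wallis reduction formula $a_n = \tfrac{n-1}{n}\,a_{n-2}$ and then iterate it down to the base cases $a_0$ and $a_1$, checking that the resulting closed form matches the two cases in the statement.

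First I would exploit the symmetry $\cos(-t)=\cos t$ to rewrite $a_n = 2\int_0^{\pi/2}\cos^n t\,dt$, which makes boundary evaluations cleaner. For $n\ge 2$, I would apply integration by parts on $\int_0^{\pi/2}\cos^n t\,dt$ with $u=\cos^{n-1}t$ and $dv=\cos t\,dt$, so that $du=-(n-1)\cos^{n-2}t\,\sin t\,dt$ and $v=\sin t$. The boundary term $\cos^{n-1}t\,\sin t\bigr|_0^{\pi/2}$ vanishes at both endpoints, leaving
\begin{equation*}
\int_0^{\pi/2}\cos^n t\,dt = (n-1)\int_0^{\pi/2}\cos^{n-2}t\,\sin^2 t\,dt = (n-1)\int_0^{\pi/2}\cos^{n-2}t\,dt - (n-1)\int_0^{\pi/2}\cos^n t\,dt,
\end{equation*}
where I used $\sin^2 t = 1-\cos^2 t$. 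Solving for the integral gives $a_n = \frac{n-1}{n}\,a_{n-2}$.

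Next I would compute the two base cases directly: $a_0 = \int_{-\pi/2}^{\pi/2} dt = \pi$ and $a_1 = \int_{-\pi/2}^{\pi/2}\cos t\,dt = 2$. Now I would prove the closed form by induction on $m$. For the even case $n=2m$, iterating the reduction formula yields $a_{2m} = \frac{(2m-1)(2m-3)\cdots 1}{(2m)(2m-2)\cdots 2}\,a_0 = \pi\,\frac{(2m-1)!!}{(2m)!!}$. For the odd case $n=2m+1$, the same iteration gives $a_{2m+1} = \frac{(2m)(2m-2)\cdots 2}{(2m+1)(2m-1)\cdots 3}\,a_1 = 2\,\frac{(2m)!!}{(2m+1)!!}$. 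A short induction formalizes each of these products, with the inductive step consisting of multiplying both sides of the inductive hypothesis by $\frac{n-1}{n}$.

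There is no real obstacle: the whole proof is a standard textbook computation. The only place worth a sentence of care is confirming that the boundary term in the integration by parts vanishes (which it does because $\sin 0 = 0$ and $\cos(\pi/2)=0$, and $n\ge 2$ so no negative powers appear). Everything else is elementary arithmetic with double factorials.
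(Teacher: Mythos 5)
Your proof is correct and follows exactly the same route as the paper: integration by parts to obtain the reduction formula $a_n=\frac{n-1}{n}a_{n-2}$, followed by iteration from the base cases $a_0=\pi$ and $a_1=2$. You simply supply the details (the symmetry reduction, the vanishing boundary term, the induction) that the paper leaves implicit.
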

\vspace{3mm}

\begin{proof}[\bf Proof]
Integrating by parts we easily receive the relation
$a_n=\dfrac{n-1}{n}a_{n-2},$ where $n=2,3,4,\ldots.$ Using
$a_0=\pi,$ $a_1=2$ we get \eqref{www}.
\end{proof}
\vspace{3mm}

This work was supported by DFG projects and the Fundamental Science
Foundation of Uzbekistan. The second named author gratefully
acknowledge to the hospitality of the University of Mainz.

\newpage


\begin{thebibliography}{99}

\bibitem{MK.BS} {\sc M.Klaus} and {\sc B.Simon:} Coupling Constant
Thresholds in Nonrelativistic Quantum Mechanics. I. Short-Range
Two-Body Case. Annals of Physics {\bf 130}, 251-281, New Jersey,
1980.

\bibitem{Sob}  {\sc A.V. Sobolev}: The Efimov effect. Discrete spectrum
asymptotics, Commun. Math. Phys. {\bf 156} (1993), 127--168.

\bibitem{Yaf2}  {\sc D.R.Yafaev:} On the theory of the discrete spectrum of
the three-particle Schr\"{o}dinger operator. Math. USSR-Sb. {\bf 23}
535--559  (1974) .

\bibitem{Tam1}  {\sc H. Tamura}: The Efimov effect of three-body Schr\"{o}dinger operator, J.
Funct. Anal. {\bf 95} (1991), 433--459.

\bibitem{SL} {\sc S.N.Lakaev:} The Efimov's effect of the three identical
quantum particle on a lattice. Func. Anal.Appl. {\bf 27} (1993),
15-28.

\bibitem{ALZM} {\sc S.Albeverio,} {\sc S.N.Lakaev,} {\sc Z.I.Muminov:}
Schr\"{o}dinger operators on lattices. The Efimov effect and
discrete spectrum asymptotics. Ann. Henri Poincar\'{e}. {\bf 5},
(2004),743--772.

\bibitem{ALMZM} {\sc S.Albeverio,} {\sc S.N.Lakaev,} {\sc K.A. Makarov,}
{\sc Z.I.Muminov:} The Threshold Effects for the Two-particle
Hamiltonians on Lattices, Comm.Math.Phys. {\bf 262}(2006), 91--115.

\bibitem{L92}  {\sc S.N.Lakaev:} Bound states and resonances of $N$
particle discrete Schr\"{o}dinger operators. Theor.and Math.Phys.
{\bf 91}(1992), No.1,51-65.

\bibitem{RSIV}  {\sc M. Reed} and {\sc B. Simon:} Methods of modern mathematical
physics. IV: Analysis of operators. Academic Press, N.Y., 1978.

\bibitem{RBAP}  {\sc K. Winkler,} {\sc G. Thalhammer,} {\sc F. Lang,} {\sc R. Frimm}
{\sc J. H. Denschlag,} {\sc A. J. Daley,} {\sc A. Kantian,} {\sc H.
P. B\"{u}chler} and {\sc P. Zoller:} Repulsively bound atom pairs in
an optical lattice. LETTERS. Vol. 441$|$15 June 2006$|$ {\bf
doi:10.1038/nature04918} nature

\end{thebibliography}
\end{document}